\documentclass[journal]{IEEEtran}
\usepackage{citesort}
\usepackage[dvips]{graphicx}
\usepackage[cmex10]{amsmath}
\usepackage{algorithm}
\usepackage{algpseudocode}
\usepackage{array}
\usepackage[tight,footnotesize]{subfigure}
\usepackage{setspace}
\usepackage{enumerate} 
\usepackage{url}

\usepackage{amsfonts, amssymb, amsthm, mathrsfs}
\usepackage{setspace}
\usepackage{Tabbing}
\usepackage{fancyhdr}
\usepackage{lastpage}
\usepackage{extramarks}
\usepackage{chngpage}
\usepackage{soul,color}
\usepackage{indentfirst}
\usepackage{float,wrapfig}
\usepackage{times}
\usepackage{multirow}
\usepackage{bm}
\usepackage{bbm}
\usepackage{mathtools}

\usepackage{lipsum}





\newtheorem{theo}{Theorem}

\newtheorem{prop}{Proposition}

\newtheorem{lemma}{Lemma}

\newcolumntype{I}{!{\vrule width 1.5pt}}
\newlength\savedwidth

\newcommand{\sinr}{\textrm{SINR}}

\newcommand{\defeq}{\vcentcolon=}

\begin{document}

\title{Distributed Resource Allocation in Device-to-Device Enhanced Cellular Networks}

\author{Qiaoyang Ye, 
        Mazin Al-Shalash,
        Constantine Caramanis
        and Jeffrey G. Andrews
\thanks{Q. Ye, C. Caramanis and  J. G. Andrews are with the University of Texas at Austin, USA, M. Shalash is with Huawei Technologies. Email: 
qye@utexas.edu,  mshalash@huawei.com, constantine@utexas.edu, jandrews@ece.utexas.edu. 
Manuscript last revised: \today.}}

\maketitle

\begin{abstract}
Cellular network performance can significantly benefit from direct device-to-device (D2D) communication, but interference from  cochannel D2D communication limits the performance gain. In hybrid networks consisting of D2D and cellular links, finding the optimal interference management is challenging. In particular, we show that the problem of maximizing network throughput while guaranteeing predefined service levels to cellular users is non-convex and hence intractable. Instead, we adopt a distributed approach that is computationally extremely efficient, and requires minimal coordination, communication and cooperation among the nodes. The key algorithmic idea is a signaling mechanism that can be seen as a fictional pricing mechanism, that the base stations optimize and transmit to the D2D users, who then play a {\em best response} (i.e., selfishly) to this signal. Numerical results show that our algorithms converge quickly, have low overhead, and achieve a significant throughput gain, while maintaining the quality of cellular links at a predefined service level.
\end{abstract}

\IEEEpeerreviewmaketitle

\section{Introduction}
To meet the growing demand for local wireless services, direct communication between user equipments (UEs) -- called \emph{device-to-device (D2D) communication} -- is envisioned as an  important technology component for LTE-Advanced. Taking advantage of physical proximity, D2D enables a more flexible infrastructure than conventional cellular networks with potential benefits such as efficient resource utilization, large throughput and reduced end-to-end latency, exploiting which D2D can create new business opportunities by supporting various applications, such as  content sharing, multiplayer gaming, social networking services and  mobile advertising \cite{DopRin09,CorLar10,WuTav10,FodDah12,LinAnd13D2Doverview,Qualcomm14}. In this paper, we focus on the use cases where the D2D traffic is generated by UEs themselves (e.g., sharing a just-taken video), and thus only consider one hop D2D transmission, which is different from the case where D2D works as a relay \cite{NisIto14}.
 In D2D-enabled cellular networks,  resources can be either allocated orthogonally to D2D and cellular links, or shared between them. While the interference management is certainly simplified in a network with orthogonal allocation, the resource utilization is less efficient. To improve the resource utilization, we consider a hybrid network with shared  allocation -- called a \emph{shared network} in this paper.

The uplink spectrum is often under-utilized compared to downlink \cite{MarBas08}, and thus letting D2D links use uplink resources may improve the resource utilization. Moreover, when D2D links share downlink resources, base stations (BSs) become fairly strong interferers for D2D receivers, and D2D transmitters may cause high interference to nearby co-channel cellular UEs, which may significantly degrade the network performance \cite{YeAndHop13}. In contrast, when D2D links use uplink resources,  the interference from D2D to cellular transmissions  can be better handled, since the BSs that are more powerful than UEs suffer from D2D interference. Therefore, sharing uplink spectrum is preferable overall~\cite{LinAnd13D2Doverview}.

The objective of this paper is to improve the network throughput by allowing D2D communication to share the cellular uplink resources. At the same time, we restrict the access of D2D links to the uplink spectrum in order to manage the interference.
Generally, centralized interference management requires the central controller (e.g., the BS) to acquire the channel state information (CSI) between each transmitter and receiver. This requires high overhead, particularly in the scenario where channels vary rapidly (e.g., in a high-mobility environment).  Therefore, a distributed algorithm requiring only local information  is preferable. We address the following design question in this paper: how to intelligently manage spectrum for D2D with only local information and BS assistance (e.g., setting a high cost for D2D links causing strong interference), so as to manage the interference and improve the network throughput? 

\subsection{Related Work}
There has been increasing interest recently in the investigation of interference management in shared networks.  Power control is one viable approach, e.g., consideration of a greatly simplified model with one cellular UE and one D2D link \cite{YuDop11}, a simple power reduction method based on the derived signal-to-interference-plus-noise ratio (SINR) \cite{YuTir09},  and study of several power control schemes including fixed power and fixed SNR target \cite{GuBae11} are some existing works. Another popular approach, related to the direction we propose, is to  intelligently manage spectrum for D2D links based on channel conditions and nearby interfering UEs, e.g., maximal mutual interference minimization~\cite{JanKoi09}, network throughput maximization~\cite{ZulHua10}, setting exclusive D2D transmission zone to achieve interference avoidance \cite{XuWan10,MinLee11}, and interference randomization through frequency and/or time hopping \cite{CheCha10,YeAndHop13}.  
However,  the key new aspect of D2D-enabled cellular networks -- assistance of BSs -- has received much less treatment in the literature. {Possibly one reason for this is that the computational problem itself is quite difficult, and the communication and coordination alone required for a good centralized solution might be prohibitive.}

As we discuss below, the key approach adopted in this paper is a two-stage distributed algorithm that has a game theoretic interpretation: BSs send out a signal representing a fictional price that can be considered as the assistance from BSs, and then D2D users optimize a local objective function adapting to that price, and  to what the other users are doing.  Since individual D2D links optimize their local functions ``selfishly'', this approach has a game theoretic interpretation, and this allows us to use algorithms and analysis concepts from game theory, even though there is no actual market, and the users agree to ``play'' this game using the BS's signal, without actually exchanging currency.

Game theory has been used in various disciplines to model competition for limited resources in more general networks. For example, work has been done considering spectrum sharing based on local bargaining~\cite{CaoZhe05}, repeated game \cite{EtkPar07}, auction mechanisms \cite{XuSon12,XuSon13} and two-stage game  \cite{RazLuo11,WanSon13}. Paper~\cite{SonNiy14} demonstrates several different game models for D2D resource allocation, where an interesting example is to use the reverse iterative combinatorial auction \cite{XuSon12globecom}. In this paper, the game theoretic approach is used as an {\em algorithmic technique} to obtain efficient distributed spectrum management. Similar to the recent work \cite{WanSon13}, we model a Stackelberg game to control the interference from D2D to cellular network. The key difference from \cite{WanSon13} is in the upper-stage problem, where we takes into account the D2D rate. Moreover, we investigate the convergence of the algorithms for both lower-stage and upper-stage problems. Works in cognitive radio such as~\cite{RazLuo11} are related to the second part of our work (i.e., the investigation of optimal prices to charge D2D links accessing the shared resources), which proposes that secondary users adapt their powers for alleviating interference to primary users. The key techniques used in this paper to study the optimal prices are similar to \cite{RazLuo11}, but we in addition investigate the convergence of the spectrum management scheme for the D2D network (i.e., the lower-stage problem), as well as the convergence of the proposed heuristic algorithm.

\subsection{Contributions And Organizations}
{We present a distributed, efficient and low-overhead spectrum management method for D2D links to improve the  throughput while keeping the performance of cellular users at a guaranteed level. Specifically, the main contributions are:}

\textbf{A Two-stage Distributed Algorithm.}
We propose an iterative two-stage algorithm in Section~\ref{sec:formulation}. In the first stage, the BSs send a pricing signal that adapts to the gap between the aggregate interference from D2D links and a predefined interference tolerance level, where the price increases if D2D interference is higher than the tolerance level and decreases otherwise. In the second stage, each D2D link independently maximizes its utility consisting of a reward equal to its expected rate and a penalty proportional to the interference caused by this link to the BS, as measured by the pricing signal.  Note that this two-stage model is  a Stackelberg game \cite{Osb94}, and the algorithm can be seen as a pricing mechanism. This algorithm requires no cooperation among D2D links, yet succeeds in discouraging strongly interfering or low-SINR D2D links to access more resource blocks (RBs).

\textbf{Utility-based D2D resource allocation adaptation.}
In Section \ref{sec:D2Dgame}, we consider the lower-stage problem, where we maximize the D2D rate in terms of expected SINR for tractability, which provides a performance upper bound and can serve as a benchmark. Each D2D link selfishly maximizes its utility given other D2D links' decisions and the price broadcast from BSs, which essentially forms a non-cooperative game. To reduce the computational complexity and overhead, we further consider the problem of maximizing a lower bound of the utility function for each D2D link. We then propose an efficient iterative algorithm similar to a waterfilling algorithm, which only requires local information. Our simulation results show that the result obtained by the proposed iterative algorithm is very close to the solution to the upper bound problem. This further lightens the computational burden on each user.

\textbf{Cellular link performance protection.}
Given the solution of the lower-stage problem, we study the optimal price the BSs report in Section \ref{sec:uppergame}, to maximize the network utility while protecting cellular links. We show that this problem can be transformed into a linear complementarity problem (LCP). This allows us to take advantage of, and adapt for our problem, general algorithms for LCP. We further propose a simpler  heuristic algorithm based on the bisection method, and observe that it has low overhead and converges very quickly with almost no loss. We also propose a simple greedy algorithm  that leads to efficient computation at the cost of overall throughput, where the throughput loss decreases as the interference tolerance level increases, e.g., the throughput loss compared to the algorithm for LCP are about $25\%$ and $5\%$ when the interference tolerance level is $5$dB lower and above the cellular signal in our setup,~respectively.

Numerical results  in Section \ref{sec:simulation} show that the cellular links can be well protected with the average D2D throughput reduction of only  $12\%$ in our setup, compared to the scenario where all D2D links are active. On the other hand, compared to conventional cellular networks without D2D communication, the proposed algorithms provide significant throughput gain (about $5$x with $10$ D2D links per cell and average D2D link length $80$m in our simulation setup). Note that the throughput gain highly depends on the amount of D2D traffic and average D2D link length. We take $10$ D2D links per cell and link length $80$m as an illustration example in this paper.


\section{System Model}\label{sec:model}
We consider a uplink shared network, where cellular UEs in the same cell get different sub-bands (i.e., orthogonal chunks of RBs). Any general scheduling scheme for cellular UEs can be used. A potential D2D link can either transmit directly by D2D communication, or transmit to a BS. We call the choice  \emph{mode selection}. 
By intelligently conducting mode selection, we can adjust the aggregate interference in the network and thus optimize the  achievable network performance.  However, the mode selection variables in the $\sinr$ expression result in non-convexity of objective functions that are in terms of rate, i.e., $\log(1+\sinr)$.  Moreover, the mode selection variables are binary, making the problem combinatorial. Note that different mode selection schemes lead to different optimal spectrum management, due to the differences in the resource allocation of cellular users and thus the differences in the interference tolerance level. On the other hand, spectrum management affects the achievable rate and thus affects the  mode selection of D2D links.  As discussed above, it is difficult to find the optimal mode selection, let alone the joint optimal mode selection and spectrum management of D2D links, where mode selection and spectrum management are coupled with each other. Despite the intractability of the optimization problem, there are various practical (but not necessarily optimal) mode selection approaches (e.g., distance-based mode selection \cite{LinAnd13uplinkD2D}).
 
We propose the following mode selection as one viable scheme. The potential D2D transmitters are treated the same as cellular UEs when scheduling, except that we can add a weight to the scheduling metric. For example, with proportional fair scheduling~\cite{Tse05}, the user with the largest $qR_i/\bar{R}_i$ would be scheduled, where $q$ is the weight, $R_i$ and $\bar{R}_i$ are the instantaneous rate and average rate of link $i$, respectively. Without loss of generality, we assume cellular users have $q=1$.  A typical value for $q$ of potential D2D links might be $1/2$, since a potential D2D link in cellular mode would occupy both uplink and downlink resources. We can also let the weight $q$ impose the cost on the backhual usage of core networks. Considering that D2D mode has more efficient resource utilization, the potential D2D links are biased  against cellular mode using $q<1$. 
Note that other mode selection schemes can also be applied to our following framework easily.

We assume potential D2D links that are not scheduled by BSs would be in D2D mode. In other words, we propose to let each BS complete the mode selection of potential D2D links in its coverage area. Given the mode selection, we aim to find the optimal spectrum management of D2D links to maximize the network utility.  We leave the joint optimization of mode selection and spectrum management to future work.

Assuming that the cellular resource allocation is done by the BSs, our focus is on the spectrum management of D2D links. In this paper, we consider resource allocation at each RB to simplify the notation and explanation, but any general units of RBs can be considered similarly. 
The sets of cellular UEs accessing the $k$th RB and of D2D links are denoted by $\mathcal{C}_k$ and $\mathcal{D}$, respectively, where the set of cellular UEs includes the potential D2D links in cellular mode. We define  $\mathcal{I}_k$ as the set of  D2D links accessing RB $k$ (i.e., the  set of interfering D2D links). Then the SINRs of D2D link $i$ at RB $k$ and a cellular UE belonging to $\mathcal{C}_k$ are, respectively,
\begin{equation}
\sinr_{{ik}}^{(D)} = \frac{\mathbbm{1}_{\{i\in \mathcal{I}_k\}} P_{D_i}h_{ii}^{(k)}}{\sum_{j\in\mathcal{I}_k, j\neq i}P_{D_j}h_{ji}^{(k)} +\sum_{j\in\mathcal{C}_k} P_{C_j}h_{ji}^{(k)}+W_{ik}},
\end{equation}
\begin{equation}
\sinr_{ik}^{(C)} = \frac{P_{C_i}g_{ii}^{(k)}}{\sum_{j\in\mathcal{I}_k}P_{D_j}g_{ji}^{(k)} +\sum_{j\in\mathcal{C}_k,j\neq i} P_{C_j}g_{ji}^{(k)}+W_{ik}},
\end{equation}
where $\mathbbm{1}_{\{a\in \mathcal{A}\}}$ is an indicator function with $\mathbbm{1}_{\{a\in \mathcal{A}\}}=1$ if $a\in\mathcal{A}$ and $\mathbbm{1}_{\{a\in \mathcal{A}\}}=0$ otherwise,  $P_{D_i}$ and $P_{C_i}$ are the transmit powers of D2D and  cellular links, respectively, $h_{ji}^{(k)}$ is the channel gain from UE $j$ to D2D receiver $i$ at RB $k$, $g_{ji}^{(k)}$ is the channel gain from UE~$j$ to the BS serving cellular UE $i$, and $W_{ik}$ is the noise power of link $i$ at RB $k$. We use Shannon capacity to calculate rate, i.e., $R_{ik}=B\log_2(1+\sinr_{ik})$, where $B$ is the frequency bandwidth of a RB. 

\section{Problem Formulation}\label{sec:formulation}
In this section, we first formulate a single-stage optimization problem to maximize the D2D throughput with a performance protection for cellular links. The computational intractability of the single-stage optimization then motivates us to consider a distributed setting, where each D2D link tries to maximize its own utility based only on local information.

\subsection{Single-stage Problem Formulation}
Without loss of generality, we let $x_{ik}$ be the probability that D2D link $i$ accessing RB $k$. The investigation of  optimal access probabilities upper bounds the channel assignment problem where D2D links either access a RB or not (i.e., the access probability is either 1 or 0). We consider the following utility maximization problem, subject to a  D2D interference constraint to guarantee the cellular performance: 
\begin{equation}\label{eq:1stage-opt}
\begin{aligned}
\max_{\mathbf{x}}  \ & \sum_{i\in\mathcal{D}} w_i \sum_{k=1}^K R_{ik}^{(D)}(\mathbf{x})\\
\text{s.t. } & \sum_{i\in\mathcal{D}} x_{ik} P_{D_i} g_{ii}^{(k)} \leq Q_k, \forall k,\\
& x_{ik} \in [0, 1],
\end{aligned}
\end{equation}
where $w_i$ is the weight for the $i$th D2D link, and $K$ is the number of total available RBs for D2D links. Denoting the power set of $\mathcal{D}$ by $2^{\mathcal{D}}$, the rate of D2D link $i$ at RB $k$ is
\begin{equation}\label{eq:RD}
\begin{aligned}
R_{ik}^{(D)} = \sum_{\mathcal{I}_k \in 2^{\mathcal{D}}} \prod_{j\in \mathcal{I}_k} x_{jk}\prod_{n\in\mathcal{D}\setminus\mathcal{I}_k}  (1-x_{nk}) \log_2\left( 1 +\sinr_{ik}^{(D)}\right).
\end{aligned}
\end{equation}
The first constraint in~(\ref{eq:1stage-opt}) is for  protection of cellular transmissions, where $Q_k$ -- called the \emph{interference tolerance level} -- depends on the channel condition of cellular transmission on RB $k$ (e.g., $Q_k$ could be the signal strength of the cellular link using RB $k$ multiplied by a predefined threshold). 
Note that $Q_k$ can be optimized to maximize a utility function incorporating the cellular rate. In this paper, we consider $Q_k$ as a predefined parameter and leave the joint optimization of $Q_k$ and D2D resource allocation to future work. 
We  observe that (\ref{eq:1stage-opt}) is not a convex optimization problem.
The computational complexity of a brute-force approach to solve (\ref{eq:1stage-opt}) is $O(N_x^{N_D} N_D^2)$, where $N_x$ is the number of possible values of $x$ to be searched and $N_D$ is the number of D2D links. Thus, the  computation is essentially impossible for even a modest-sized  network.

Instead of the centralized approach, we adopt a different strategy that results in an {\em efficient, distributed algorithm with low coordination, cooperation and communication overhead}. For tractability, we introduce variables -- called \emph{prices} for accessing RBs -- to decouple the interference constraint in  (\ref{eq:1stage-opt}) and develop a distributed tractable framework. Particularly, BSs adjust prices to control the total D2D interference, and each D2D link individually maximizes its utility in terms of the expected rate and prices charged by BSs. This leads to a two-stage optimization problem, which consists of  a problem to find optimal prices and several small-size convex optimization problems for D2D links. Though solutions to the two-stage problem may not provide the optimal solution to the original single-stage problem (\ref{eq:1stage-opt}), this relaxation allows us to efficiently allocate resources in a distributed fashion, and the numerical results in Section~\ref{sec:simulation} demonstrate a large rate gain without serious degradation in cellular performance using the proposed algorithm for the two-stage problem.

\subsection{Two-stage Problem Formulation}
We propose a pricing mechanism, where a BS charges the D2D link $i$ in its coverage area the amount $\mu_{ik}$ per unit of the interference caused by this D2D link to the BS at RB $k$, i.e., the cost for a D2D link to access  RB $k$ is $\mu_{ik}x_{ik}P_{D_i}g_{ii}^{(k)}$. 
Assuming that each cell runs this mechanism independently, the cost of a D2D link only depends on the interference caused by this D2D link to its associated BS. 

We assume the interference from other cells is invariant when we consider the resource allocation in a typical cell. Therefore, we can incorporate the interference from neighboring cells into noise and the multi-cell scenario is simplified to a single-cell scenario. Under this assumption, the interference constraint is for the interference caused by D2D links in this cell. Note that in this case, the updated noise (incorporating inter-cell interference) is different from user to user, where generally cell-edge users suffer larger noise. Though we focus on the asynchronous scheduling scenario, the proposed framework can be easily generalized to a synchronous multi-cell scenario if the price at each RB is unified among different cells, where the BS in the proposed model becomes a network controller, and the interference becomes the aggregate interference from D2D links to all BSs in the network.

The net utility of D2D link $i$ is $U_i=w_i \sum_{k=1}^K R_{ik}^{(D)}(\mathbf{x})- \sum_{k=1}^K \mu_{ik}x_{ik}P_{D_i}g_{ii}^{(k)}$, where the first and second term can be considered as the reward and penalty functions, respectively.
The problem involves a non-cooperative network, where each D2D link aims to maximize its utility selfishly. We denote the access probabilities of D2D link $i$ by  $\mathbf{x}_i\defeq [x_{i1}, x_{i2}, \cdots, x_{iK}]^T$. The access probabilities of all other D2D links are denoted by $\mathbf{x}_{-i}\defeq[\mathbf{x}_1^T, \cdots,\mathbf{x}_{i-1}^T, \mathbf{x}_{i+1}^T, \cdots, \mathbf{x}_{N_D}^T]^T$, where $N_D$ is the number of D2D links. Similarly, we define the \textit{price vector} of D2D link $i$ as~$\boldsymbol{\mu}_i\defeq[\mu_{i1}, \mu_{i2}, \cdots, \mu_{iK} ]^T$. Given $\boldsymbol{\mu}_{i}$ and $\mathbf{x}_{-i}$, the problem for the D2D link $i$ is
\begin{equation}\label{eq:opt-D2D}
\begin{aligned}
\max_{\mathbf{x}_{i}} \ & U_i(\mathbf{x}_{i}; \mathbf{x}_{-i}, \boldsymbol{\mu}_{i})\\
\text{s.t. } & x_{ik} \in [0, 1],\ \forall k.
\end{aligned}
\end{equation}
On the other hand, the network aims to find optimal prices:
\begin{equation}\label{eq:opt-BS}
\begin{aligned}
\max_{\boldsymbol{\mu}\geq 0} \ &U_c\left(\boldsymbol{\mu}, \mathbf{x}^*\left(\boldsymbol{\mu}\right)\right)\\
\text{s.t. } & \sum_{i\in\mathcal{D}}x_{ik}^*(\boldsymbol{\mu})P_{D_i}g_{ii}^{(k)} \leq Q_k,  \ \forall k,
\end{aligned}
\end{equation}
where $U_c\left(\boldsymbol{\mu}, \mathbf{x}^*\left(\boldsymbol{\mu}\right)\right) =\sum_{i\in\mathcal{D}}  \sum_{k=1}^K \mu_{ik}x^*_{ik}\left(\boldsymbol{\mu}\right)P_{D_i}g_{ii}^{(k)}$, and $\mathbf{x}^*\left(\boldsymbol{\mu}\right)$ is the solution of (\ref{eq:opt-D2D}) for a given $\boldsymbol{\mu}$. Taking a game theoretic perspective, the above problem is a  decentralized Stackelberg game (a two-stage game), where the leader moves first and then the followers move accordingly. In this paper, the BS is the leader and the D2D links are the followers.  


To solve the two-stage problem, we use a backward induction technique. We start with the problem of the D2D links -- called a \emph{lower problem} -- and get the D2D access probability $\mathbf{x}^*\left(\boldsymbol{\mu}\right)$. By plugging $\mathbf{x}^*\left(\boldsymbol{\mu}\right)$ into (\ref{eq:opt-BS}), we then investigate the network utility maximization -- called an \emph{upper problem}.

\section{Lower Problem: A Non-cooperative D2D Network}\label{sec:D2Dgame}
Given $\boldsymbol{\mu}$,  D2D links try to maximize their utility selfishly. This defines a non-cooperative game $G_D=[\mathcal{D}, \{\mathbf{x}_i\}, \{U_i\}]$. 
%
For tractability, we use Jensen's inequality and consider the following objective function that upper bounds (\ref{eq:RD}):
\begin{equation}\label{eq:D2Dopt}
\begin{aligned}
\max_{\mathbf{x}_i} \ & w_i \sum_{k=1}^K \tilde{R}_{ik}^{(D)} (\mathbf{x})- \sum_{k=1}^K \mu_{ik}x_{ik}P_{D_i}g_{ii}^{(k)}\\
\text{s.t. } & x_{ik}\in[0,1], \ \forall k,
\end{aligned}
\end{equation}
where 
\begin{equation*}
\tilde{R}_{ik}^{(D)} = \log\left( 1 + \sum_{\mathcal{I}_k \in 2^{\mathcal{D}}} \prod_{j\in \mathcal{I}_k}  x_{jk}\prod_{n\in\mathcal{D}\setminus\mathcal{I}_k} (1-x_{nk})\sinr_{ik}^{(D)}\right).
\end{equation*}
The upper bound is tight if most $x_{ik}$ are binary. We compare the gap between the solution maximizing (\ref{eq:RD}) and (\ref{eq:D2Dopt}) numerically in Section \ref{sec:simulation} and leave the analysis to future work.

We adopt an identical price for D2D links accessing the same RB, i.e., $\mu_{ik}=\mu_{jk}$. The rationale for doing this is that the BS only cares about the aggregate interference, rather than the differences between the interference values from different D2D links. The structure of (\ref{eq:D2Dopt}) suggests decoupling the lower problem into $K$ subproblems, where we consider each RB independently. In the rest of this paper, we consider a typical RB, and ignore the RB index $k$ for notation simplicity.

\subsection{Distributed Algorithms Design}\label{sec:D2Dgame1}
Optimization problems produce solutions with certain optimality guarantees. In our setting, however, the D2D links behave in a non-cooperative fashion. Thus, understanding the behavior and performance of our algorithm requires consideration of a different solution concept. This notion has been well-studied in game theory, and it is known that the analog of stationary points in an optimization solution are the so-called {\em Nash Equilibrium} (NE) points. In our context, these are the fixed points from which no D2D link would want to {\em unilaterally deviate} \cite{Osb94}. In the rest of this paper, the NE points always refer to the NE of the D2D non-cooperative game $G_D$.
In this subsection, we study what these NE points are and propose a algorithm that converges to a NE.

We denote the feasible region of  $x_{i}$ by $\mathcal{X}_i$, where $\mathcal{X}_i = \left\{x_i \in [0,1] \right\}$. 
The existence of NE for the non-cooperative game is given by Lemma~\ref{lemma:existD2DNE}, according to the Debreu-Glicksberg-Fan Theorem~\cite{Deb52,Fan52,Gli52}.
\begin{lemma}\label{lemma:existD2DNE}
If $\mathcal{X}_i$ is compact and convex, $U_i$ is concave in $x_i$ given $\mathbf{x}_{-i}$ and  continuous, then the NE exists.
\end{lemma}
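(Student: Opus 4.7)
The plan is to verify that the hypotheses of the Debreu--Glicksberg--Fan theorem are satisfied for our game $G_D=[\mathcal{D},\{\mathbf{x}_i\},\{U_i\}]$, and then to invoke that theorem directly. Since the paper already cites this theorem, the work reduces to checking its preconditions against the assumptions of the lemma, and sketching how the fixed-point argument underlying the theorem would be instantiated in our setting.

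First I would set up the machinery. Define the best-response correspondence of D2D link $i$ as
\begin{equation*}
B_i(\mathbf{x}_{-i}) \defeq \argmax_{x_i \in \mathcal{X}_i} U_i(x_i;\mathbf{x}_{-i}),
\end{equation*}
and the joint best-response correspondence $B(\mathbf{x}) \defeq \prod_{i\in\mathcal{D}} B_i(\mathbf{x}_{-i})$ on the product strategy space $\mathcal{X}\defeq\prod_{i\in\mathcal{D}}\mathcal{X}_i$. A fixed point of $B$ is precisely a Nash equilibrium of $G_D$, so existence of a NE reduces to existence of such a fixed point, which I would establish via Kakutani's fixed-point theorem.

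Next I would check the Kakutani hypotheses in turn. Compactness and convexity of $\mathcal{X}$ follow immediately from the assumption on each $\mathcal{X}_i$ together with the fact that finite products preserve both properties. Nonemptiness of $B_i(\mathbf{x}_{-i})$ follows from the Weierstrass extreme value theorem, since $U_i(\cdot;\mathbf{x}_{-i})$ is continuous on the compact set $\mathcal{X}_i$. Convex-valuedness of $B_i$ is a direct consequence of concavity of $U_i$ in $x_i$: the set of maximizers of a concave function over a convex set is convex. Finally, upper hemicontinuity (equivalently, closed graph, given compactness of the codomain) of $B_i$ follows from Berge's maximum theorem, using joint continuity of $U_i$ and the fact that $\mathcal{X}_i$ does not depend on $\mathbf{x}_{-i}$ (so the feasible-set correspondence is trivially continuous). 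Product correspondences inherit these properties, so $B:\mathcal{X}\rightrightarrows\mathcal{X}$ satisfies all Kakutani hypotheses and therefore has a fixed point $\mathbf{x}^*\in B(\mathbf{x}^*)$, which is the desired NE.

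There is no real obstacle here — the lemma is essentially a restatement of the hypotheses of Debreu--Glicksberg--Fan specialized to our notation — but the step that warrants the most care is the appeal to Berge's theorem for upper hemicontinuity, since this is the only place where joint (rather than separate) continuity of $U_i$ is needed. In our setting $U_i$ is a polynomial in $\mathbf{x}$ plus a linear penalty term, so joint continuity is clear; I would simply remark that the continuity hypothesis in the lemma should be read as joint continuity on $\mathcal{X}$, which is automatic for the concrete $U_i$ defined in Section~\ref{sec:formulation}.
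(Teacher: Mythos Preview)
Your proposal is correct and follows the same route as the paper, which simply invokes the Debreu--Glicksberg--Fan theorem without giving any proof. You actually go further than the paper by unpacking the standard Kakutani fixed-point argument behind that theorem; this extra detail is sound but unnecessary, since the lemma is nothing more than a restatement of Debreu--Glicksberg--Fan in the paper's notation.
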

It is straightforward to show that the above  conditions are satisfied, and thus we have at least one NE. Then a natural question follows: how to attain a NE? 

For fixed $\mathbf{x}_{-i}$ and $\mu$, the problem (\ref{eq:D2Dopt}) is a convex optimization, and the optimal solution is the point which vanishes the first derivative of the objective function (if feasible): 
\begin{equation}\label{eq:bestx}
\begin{aligned}
x_i^* &= \left[\frac{w_i}{\mu P_{D_i}g_{ii}\ln 2}  \right. \\
& \left.  -\frac{1}{\sum_{\mathcal{I}\in  2^\mathcal{D}} \prod_{j\in\mathcal{I}\setminus i} x_j \prod_{n\in\mathcal{D}\setminus\mathcal{I}} (1-x_n)\sinr_{i}^{(D)}}\right]_0^1,
\end{aligned}
\end{equation}
where $[x]_0^1=\min\{1, \max\{0, x\}\}$. We define the following function: $f(x_{1},\cdots, x_{N_D}; \mu)=\left(x^*_1(\mathbf{x}_{-1}),\cdots,x^*_{N_D}(\mathbf{x}_{-N_D})\right)$, where $x_i^*(\mathbf{x}_{-i})$ is given by~(\ref{eq:bestx}). Function $f$ describes the optimal resource access probabilities given that the access probabilities of other links are fixed, and thus is  called the \emph{best-response (BR) function}. We propose a synchronous iterative algorithm -- called the \emph{BR Algorithm}, where all D2D links adjust their access probabilities simultaneously according to 
\begin{equation*}
\left(x_1(t+1),\cdots,x_{N_D}(t+1)\right) = f(x_{1}(t),\cdots, x_{N_D}(t);\mu).
\end{equation*}
Applying the \emph{Maximum Theorem} \cite{ShuLeu07}, we can show that $f$ is continuous. Note that the BR Algorithm will never converge to a solution that is not a NE, since each D2D link has the access probability that maximizes its utility, which implies that no links can gain by changing only their own access probabilities unilaterally at the convergence point.

Though procedures of the BR Algorithm are simple, the complexity to calculate (\ref{eq:bestx}) is high, due to the expectation calculation involving $N_D$ Bernoulli random variables, whose complexity is $O(2^{N_D} N_D^2)$. In addition,  D2D links need to exchange their current access probabilities, which causes high overhead. The overhead and complex computation are not desirable, especially for UEs that are power limited. Other algorithms such as gradient-projection based algorithm \cite{BerTsi89} or algorithms in learning automata \cite{NarTha12} can also be applied, with the disadvantages of either slow convergence or memory space limit. These motivate the following subsection, where we consider  a lower bound of the objective function in~(\ref{eq:D2Dopt}).

\subsection{Joint Resource Allocation and Power Control -- A Lower Bound Problem}\label{sec:D2DgameLB}
In problem (\ref{eq:D2Dopt}), each D2D link maximizes the utility in terms of the expected SINR. Approximating the rate to be calculated by expected interference rather than expected SINR, we have
\begin{equation}\label{eq:opt-D2D-cvx}
\begin{aligned}
\max_{x_i} \ & w_i  \log\left(1+\sinr'_i \right) - \mu x_{i}P_{D_i}g_{ii}\\
\text{s.t. } & 0\leq x_{i} \leq 1,
\end{aligned}
\end{equation}
where $\sinr'_i=\frac{x_{i}P_{D_i}h_{ii}}{\sum_{j\in\mathcal{D},j\neq i}x_{j}P_{D_j}h_{ji} +\sum_{j\in\mathcal{C}} P_{C_j}h_{ji}+W_{i}}$.
This problem motivates a low-complexity low-overhead algorithm, as shown below.

Variables $x_i$ in~(\ref{eq:opt-D2D-cvx}) can  be considered as a joint resource allocation and  power control variable, where $\mathbbm{1}\left(x_i>0\right)$ indicates whether D2D link $i$ accesses the RB, and the value of $x_i$ denotes the fraction of maximal transmit power to use.  
The strategy with respect to (\ref{eq:D2Dopt}) can be considered as a scheme similar to random hopping (with different hopping probabilities at each link), while the strategy in (\ref{eq:opt-D2D-cvx}) is deterministic, which considers power control in addition to resource allocation. Intuitively, the hopping scheme  randomizes  strong interference, and thus may potentially provide a larger gain than the latter case, though we consider power control jointly. We show this relationship mathematically in Proposition \ref{prop:LBD2D}.

\begin{prop}\label{prop:LBD2D}
The optimization problem (\ref{eq:opt-D2D-cvx}) maximizes a lower bound of the utility function in~(\ref{eq:D2Dopt}).
\end{prop}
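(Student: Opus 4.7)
The plan is to show that, term by term, the objective of (\ref{eq:opt-D2D-cvx}) lower-bounds the objective of (\ref{eq:D2Dopt}). Since the penalty term $\mu x_i P_{D_i} g_{ii}$ is identical in both formulations, everything reduces to verifying the single inequality
\begin{equation*}
\log\!\left(1 + \sinr'_i\right) \;\leq\; \tilde{R}_i^{(D)}(\mathbf{x}).
\end{equation*}
Because $\log(1+\cdot)$ is monotonically increasing, it suffices to show that the quantity $\sinr'_i$ appearing in (\ref{eq:opt-D2D-cvx}) is bounded above by the expected-SINR quantity inside the logarithm in $\tilde{R}_i^{(D)}$.

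First I would rewrite the random-subset sum inside $\tilde{R}_i^{(D)}$ as an expectation. Introduce independent Bernoulli indicators $X_j$ with $\Pr(X_j=1)=x_j$, representing the event $\{j\in\mathcal{I}\}$. Then by construction the nested sum inside the logarithm equals
\begin{equation*}
\mathbb{E}\!\left[\frac{X_i P_{D_i}h_{ii}}{\sum_{j\neq i} X_j P_{D_j}h_{ji} + \sum_{j\in\mathcal{C}} P_{C_j}h_{ji} + W_i}\right].
\end{equation*}
Because $X_i$ is independent of the remaining $\{X_j\}_{j\neq i}$ (and the denominator does not involve $X_i$), this expectation factors as $x_i \cdot \mathbb{E}[1/Y]$, where $Y = \sum_{j\neq i} X_j P_{D_j}h_{ji} + \sum_{j\in\mathcal{C}} P_{C_j}h_{ji} + W_i$ and the leading constant $P_{D_i}h_{ii}$ is pulled out.

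The key step is Jensen's inequality: since $y \mapsto 1/y$ is convex on $(0,\infty)$, $\mathbb{E}[1/Y] \geq 1/\mathbb{E}[Y]$. Computing $\mathbb{E}[Y] = \sum_{j\neq i} x_j P_{D_j}h_{ji} + \sum_{j\in\mathcal{C}} P_{C_j}h_{ji} + W_i$ by linearity yields
\begin{equation*}
x_i \cdot \mathbb{E}\!\left[\frac{P_{D_i}h_{ii}}{Y}\right] \;\geq\; \frac{x_i P_{D_i}h_{ii}}{\sum_{j\neq i} x_j P_{D_j}h_{ji} + \sum_{j\in\mathcal{C}} P_{C_j}h_{ji} + W_i} \;=\; \sinr'_i.
\end{equation*}
Applying the monotone transformation $\log(1+\cdot)$ to both sides completes the argument: $\log(1+\sinr'_i) \leq \tilde{R}_i^{(D)}$. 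Weighting by $w_i \geq 0$ and subtracting the common penalty term gives the claimed pointwise domination of the objectives on the common feasible set $[0,1]$, so the optimum of (\ref{eq:opt-D2D-cvx}) is at most the optimum of (\ref{eq:D2Dopt}).

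No real obstacle is anticipated; the only subtle point is the independence of $X_i$ from the denominator, which is what allows the factor $x_i$ to come out cleanly and produce exactly $\sinr'_i$ after Jensen's inequality is applied to the remaining random variable. Writing the argument through expectations rather than through the explicit power-set sum, as done above, keeps the bookkeeping light and makes the role of the convexity of $1/y$ transparent.
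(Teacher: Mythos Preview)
Your argument is correct and follows essentially the same route as the paper: both proofs recast the expected SINR as an expectation of a convex function of the random interference and invoke Jensen's inequality (you phrase it as convexity of $y\mapsto 1/y$, the paper as convexity of $f(I)=P_{D_i}h_{ii}/(I+\text{const})$, which is the same thing). Your write-up is more explicit than the paper's in one useful respect: you spell out the Bernoulli construction and use the independence of $X_i$ from the denominator to pull out the factor $x_i$, which is exactly what is needed to land on $\sinr'_i$; the paper's proof leaves that step implicit.
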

\begin{proof}
Denoting the interference from other D2D links by $I$, the SINR can be written as
$
\mathbb{E}\left[\sinr_{D_i}\right] = \mathbb{E}_I \left[\frac{P_{D_i}h_{ii}}{I + \sum_{j\in\mathcal{C}} P_{C_j}h_{ji}+W_{i}}  \right].
$
It is straightforward to verify that $f(I) =\frac{P_{D_i}h_{ii}}{I + \sum_{j\in\mathcal{C}} P_{C_j}h_{ji}+W_{i}} $  is convex. By Jensen's inequality, we have $f(\mathbb{E}[I]) \leq \mathbb{E}\left[f(I)\right]$, which completes the proof.
\end{proof}

We call (\ref{eq:opt-D2D-cvx}) the lower bound problem of (\ref{eq:D2Dopt}) in this paper. Invoking Lemma \ref{lemma:existD2DNE} again, we can show that there is at least one NE for the D2D game formulated in this subsection.  Though there may exist multiple NEs in general, our setup admits a unique NE under some specific conditions; we specify those precisely in Section \ref{sec:iterative-d2dalgo}. Note that the NEs of the games with (\ref{eq:D2Dopt}) and with (\ref{eq:opt-D2D-cvx}) are not necessarily the same, and thus Proposition~\ref{prop:LBD2D} does not say that the BR Algorithm in Section \ref{sec:D2Dgame1} always performs better than the algorithms proposed in the following subsection.

Given $\mathbf{x}_{-i}$ and $\mu$, (\ref{eq:opt-D2D-cvx}) is a convex optimization problem and its optimal solution is given by Proposition \ref{prop:optimalD2D}.
\begin{prop}\label{prop:optimalD2D}
The solution of (\ref{eq:opt-D2D-cvx}) has the following form
\begin{equation}\label{eq:optsol-D2D}
x_{i} ^*=   \left[\frac{a_{i}-s_{i}}{P_{D_i}h_{ii}}\right]_0^1,
\end{equation}
where $a_{i} = \frac{w_iP_{D_i}h_{ii}}{\mu P_{D_i} g_{ii}} - \sum_{j\in\mathcal{C}} P_{C_j}h_{ji}-W_{i}$, $s_{i}=\sum_{j\neq i} x_{j}P_{D_j}h_{ji}$, and $[x]_0^1= \min\{1, \max\{x, 0\}\}$.
\end{prop}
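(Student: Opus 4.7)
The plan is to exploit concavity to reduce the constrained problem to a simple root-finding step followed by a box projection. First I would verify that the objective in~(\ref{eq:opt-D2D-cvx}) is concave in $x_i$ when $\mathbf{x}_{-i}$ is held fixed: writing the denominator of $\sinr'_i$ as a constant (in $x_i$) plus the linear term coming from self-transmission is unnecessary because the relevant quantity is $1+\sinr'_i = (I_i + x_i P_{D_i}h_{ii})/I_i$, where $I_i \defeq s_i + \sum_{j\in\mathcal{C}} P_{C_j}h_{ji} + W_i$ is independent of $x_i$. Thus $\log(1+\sinr'_i) = \log(I_i + x_i P_{D_i}h_{ii}) - \log I_i$ is concave in $x_i$ as the composition of the concave $\log$ with an affine map, and the penalty term is linear; so the whole objective is concave on the interval $[0,1]$.

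Next I would compute the stationary point of the unconstrained objective by differentiating with respect to $x_i$ and setting the derivative to zero:
\begin{equation*}
\frac{w_i P_{D_i}h_{ii}}{I_i + x_i P_{D_i}h_{ii}} - \mu P_{D_i} g_{ii} = 0.
\end{equation*}
Solving this linear equation for $x_i$ gives $x_i P_{D_i}h_{ii} = \dfrac{w_i P_{D_i}h_{ii}}{\mu P_{D_i} g_{ii}} - I_i$. Substituting back $I_i = s_i + \sum_{j\in\mathcal{C}} P_{C_j}h_{ji} + W_i$ and rearranging using the definition of $a_i$ produces exactly $(a_i - s_i)/(P_{D_i}h_{ii})$, which is the interior candidate in~(\ref{eq:optsol-D2D}).

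Finally, to handle the box constraint $x_i \in [0,1]$, I would invoke the standard fact that for a concave one-dimensional function on an interval the constrained maximizer is either an interior stationary point, or the closest endpoint to that stationary point when it falls outside the interval. This is precisely the clipping operation $[\,\cdot\,]_0^1$, so projecting the interior candidate onto $[0,1]$ yields the claimed optimal $x_i^*$. The only nontrivial bookkeeping is keeping track of which quantities depend on $x_i$ versus $\mathbf{x}_{-i}$; there is no real obstacle here, since the problem is scalar and convex once we hold $\mathbf{x}_{-i}$ fixed.
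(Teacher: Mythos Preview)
Your proposal is correct and essentially identical to the paper's proof: the paper invokes the KKT conditions for the box-constrained problem, writes out the same derivative $\frac{w_iP_{D_i}h_{ii}}{x_{i}P_{D_i}h_{ii} + s_i + I_{Ci}} - \mu P_{D_i}g_{ii}$, and reads off the three cases (interior, $x_i=0$, $x_i=1$) to obtain the clipped formula. Your concavity check and ``project the stationary point onto $[0,1]$'' argument is just the elementary restatement of those KKT conditions for a scalar concave problem, so the two approaches coincide.
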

\begin{proof}
According to the  Karush-Kuhn-Tucker (KKT) conditions~\cite{BoyVan04}, we have $\frac{\partial U_i}{\partial x_{i}}=0$ if $x_i\in(0,1)$, $\frac{\partial U_i}{\partial x_{i}}\leq 0$ if $x_i=0$, and $\frac{\partial U_i}{\partial x_{i}}\geq 0$ otherwise, where
$\frac{\partial U_i}{\partial x_{i}}=\frac{w_iP_{D_i}h_{ii}}{x_{i}P_{D_i}h_{ii} + \sum_{j\in\mathcal{D},j\neq i}x_{j}P_{D_j}h_{ji}+\sum_{j\in\mathcal{C}} P_{C_j}h_{ji}+W_{i}} - \mu_{i}P_{D_i}g_{ii}$.
The above equations and inequations result in (\ref{eq:optsol-D2D}).
\end{proof}

Eq.  (\ref{eq:optsol-D2D}) is similar to the waterfilling function in power allocation problems, except that our constraint $x_i\in[0,1]$ is independent over different RBs and thus we obtain a closed-form solution  (\ref{eq:optsol-D2D}). Leveraging existing works on waterfilling problems, we propose an iterative algorithm similar to the iterative waterfilling algorithm (see, e.g., \cite{YuCio02,JinRhe05,ShuLeu07}).

\subsection{Algorithm Design for the Lower bound Problem}\label{sec:iterative-d2dalgo}
Similar to Section \ref{sec:D2Dgame}, we propose a synchronous iterative algorithm based on the BR function, defined as $f_L(x_{1},\cdots, x_{N_D}; \mu)=\left(x^*_1(\mathbf{x}_{-1}),\cdots,x^*_{N_D}(\mathbf{x}_{-N_D})\right)$, where $x_i^*(\mathbf{x}_{-i})$ is given by~(\ref{eq:optsol-D2D}). The algorithm -- called the \emph{LB Algorithm} -- is given by Algorithm~\ref{algo:d2d}. Similar to the BR Algorithm, we have that if the LB Algorithm converges, then it converges to a NE. 
\begin{algorithm}
  \caption{LB Algorithm: an iterative  algorithm for lower bound problem of D2D}\label{algo:d2d}
  \begin{algorithmic}[1]
    \State Initialization: given price $\mu \geq 0$, let $x_i(0)=1,\forall i$, and $t=0$;
    
    \While{ $\|\mathbf{x}(t) - \mathbf{x}(t-1)\|\geq \epsilon$ } 
    \State let $\mathbf{x}(t+1) = f_L(\mathbf{x}(t); \boldsymbol{\mu})$;
    \State let $t=t+1$;
    \EndWhile  
  \end{algorithmic}
\end{algorithm}

\textbf{Implementation interpretations.} Adopting the LB Algorithm, each D2D link first acquires CSI of the link from its transmitter to the BS. This can be either estimated based on the downlink signal (e.g., in a  time-division duplexing (TDD) uplink/downlink configuration), or provided by the BS, which measures the uplink channel and sends the information to the D2D user (e.g., in a frequency-division duplexing (FDD) uplink/downlink configuration). Apart from uplink CSI, each D2D link also measures the channel between the transmitter to its paired receiver. The frequency to update CSI depends on the channel variance. For example, in a slow mobility scenario,  D2D links may just update the information once (at the beginning of each resource allocation period). At each iteration of the LB Algorithm, every D2D link measures the interference if accessing a RB. There is no additional message exchange in this step. Thus,  the LB Algorithm only requires local information and reduces the overhead, as shown in~Fig.~\ref{fig:algostructure}.

\begin{figure}
\centering
\includegraphics[width=8cm, height=6.2cm]{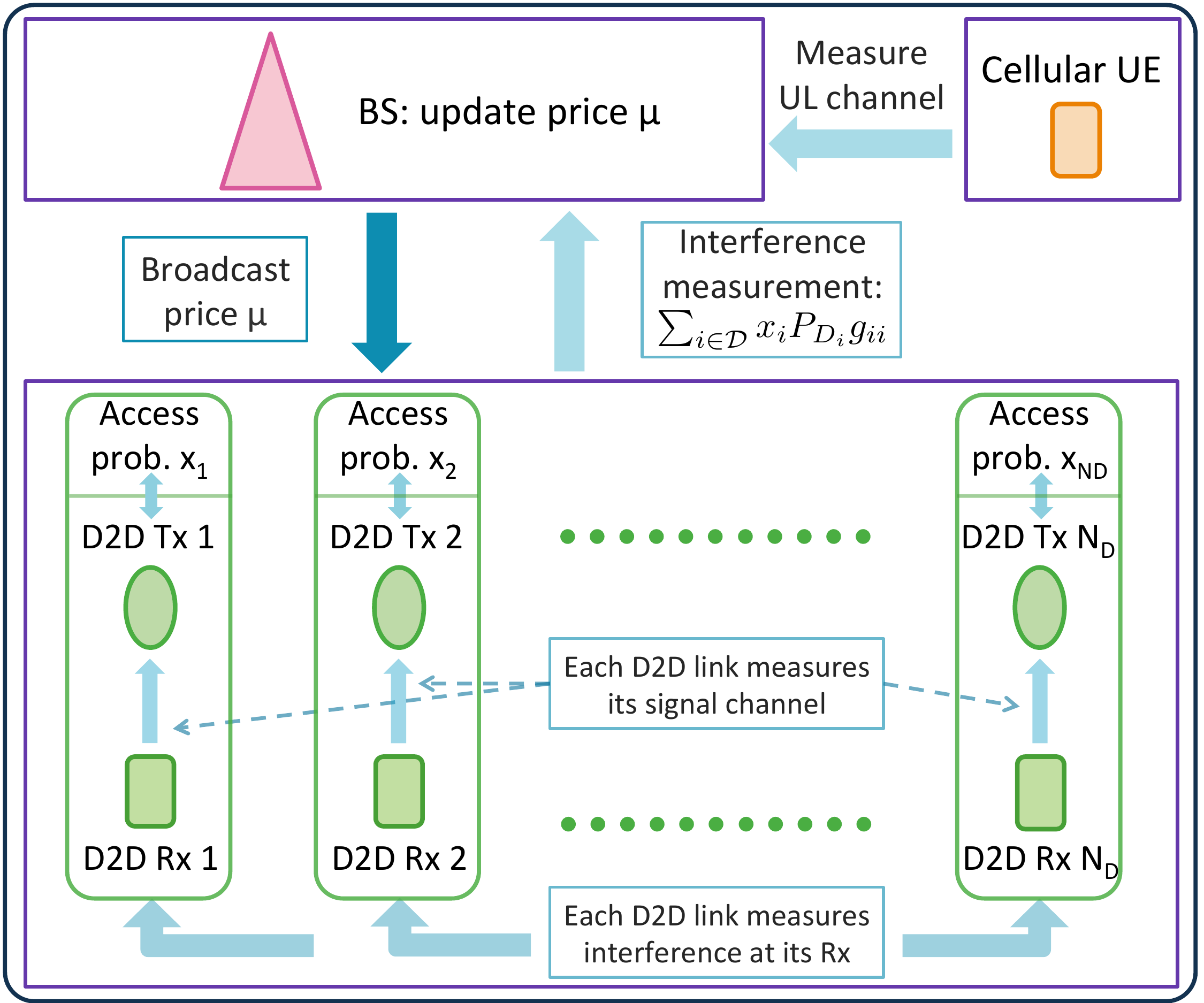}
\caption{Illustration of the proposed algorithm. The arrows filled with dark color indicate the procedures requiring message exchange, while the arrows filled with light color indicate the procedures involving only local measurements. The lower part describes the LB Algorithm for the lower problem, while the upper part illustrates algorithms proposed for the upper problem.}
\label{fig:algostructure}
\end{figure}

\textbf{Convergence analysis.} To get the convergence criteria of the LB Algorithm, we first investigate some basic properties of function $f_L$. We assume that there are finite number of D2D links. We call the set of D2D links with $x_{i}=1$ saturated D2D links, denoted by $\mathcal{S}\defeq \left\{i\in\mathcal{D}: x_{i}=1\right\}$, and the set of D2D links with  $x_{i}\in(0,1)$ active D2D links, denoted by $\mathcal{A}\defeq \left\{i\in\mathcal{D}: x_{i}\in(0,1)\right\}$.  Denoting $\mathbf{s} =[s_1, \cdots, s_{N_D}]^T$, we have $\mathbf{s}=\mathbf{G}\mathbf{x}$, where $\mathbf{G}$ is an $N_D \times N_D$ matrix with zero diagonal elements and $(i,j)$th element (with $i\neq j$) being~$P_{D_j}h_{ji}$. 

\begin{prop}\label{prop:BRfunc-property}
The best-response function $f_L$ has the following properties:
\begin{enumerate}
\item $f_L$ is a continuous mapping from $\mathcal{X}$ to $\mathcal{X}$.
\item $f_L$ is piecewise affine, which means that $f_L$ has the following two properties:
\begin{enumerate}
\item  The domain of function $f_L$ can be partitioned into finitely many polyhedral regions, denoted by $\mathcal{P}_1,\cdots,\mathcal{P}_d$, which are determined by the choice of $\mathcal{A}$ and $\mathcal{S}$;
\item On the polyhedron $\mathcal{P}_n$ defined by $\mathcal{A}^{(n)}$ and $\mathcal{S}^{(n)}$, we have $f_L(\mathbf{x})=\mathbf{M}^{(n)}\mathbf{x+b}^{(n)}$, where $\mathbf{b}^{(n)}$ is a constant vector, and $\mathbf{M}^{(n)}=\mathbf{B}^{(n)}\mathbf{G}$ with $\mathbf{B}^{(n)}$ being a diagonal matrix, which has $\left[\mathbf{B}_{i}^{(n)}\right]_{kl} =-\frac{1}{P_{D_i}h_{ii}}$ if $k = l, i \in\mathcal{A}^{(n)}$, and $\left[\mathbf{B}_{i}^{(n)}\right]_{kl}=0$ otherwise.
\end{enumerate} 
\end{enumerate}
\end{prop}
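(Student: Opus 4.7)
The plan is to read both properties directly off the closed-form best-response given in Proposition~\ref{prop:optimalD2D}, namely $x_i^*(\mathbf{x}_{-i}) = \left[(a_i - s_i)/(P_{D_i}h_{ii})\right]_0^1$, exploiting two facts: $a_i$ depends only on the cellular data and the fixed price $\mu$ and is independent of $\mathbf{x}$, while $s_i = [\mathbf{G}\mathbf{x}]_i$ is a linear function of $\mathbf{x}$.

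For continuity (part 1), I would observe that $\mathbf{x}\mapsto(a_i - s_i)/(P_{D_i}h_{ii})$ is affine, hence continuous, and that the clip $t\mapsto[t]_0^1 = \min\{1,\max\{0,t\}\}$ is continuous as a composition of $\min$ and $\max$. Each coordinate of $f_L$ is therefore continuous in $\mathbf{x}$, and by construction takes values in $[0,1]$, so $f_L$ is a continuous self-map of $\mathcal{X}$.

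For the piecewise affine structure (part 2), I would label each $\mathbf{x}\in\mathcal{X}$ by the pair $(\mathcal{A},\mathcal{S})$ read off at the output $f_L(\mathbf{x})$, classifying each link $i$ as inactive, active, or saturated according to whether the unclipped argument $(a_i - s_i)/(P_{D_i}h_{ii})$ is $\le 0$, in $(0,1)$, or $\ge 1$. Since each of the $N_D$ links admits three states, there are at most $3^{N_D}$ labels. For a fixed label $(\mathcal{A}^{(n)},\mathcal{S}^{(n)})$, the corresponding region $\mathcal{P}_n$ is the set of $\mathbf{x}$ satisfying $a_i - [\mathbf{G}\mathbf{x}]_i \le 0$ for inactive $i$, $0 < a_i - [\mathbf{G}\mathbf{x}]_i < P_{D_i}h_{ii}$ for $i\in\mathcal{A}^{(n)}$, and $a_i - [\mathbf{G}\mathbf{x}]_i \ge P_{D_i}h_{ii}$ for $i\in\mathcal{S}^{(n)}$. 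Since $\mathbf{G}\mathbf{x}$ is linear in $\mathbf{x}$, each of these is a linear inequality, so $\mathcal{P}_n$ is a polyhedron, and the finitely many $\mathcal{P}_n$ cover $\mathcal{X}$. On $\mathcal{P}_n$ the clipping is inactive, so $x_i^*$ reduces to $\frac{a_i}{P_{D_i}h_{ii}} - \frac{1}{P_{D_i}h_{ii}}[\mathbf{G}\mathbf{x}]_i$ for $i\in\mathcal{A}^{(n)}$, to $1$ for $i\in\mathcal{S}^{(n)}$, and to $0$ otherwise. Reading off the linear part gives precisely $\mathbf{M}^{(n)}=\mathbf{B}^{(n)}\mathbf{G}$ with the diagonal $\mathbf{B}^{(n)}$ specified in the statement, and the constant part $\mathbf{b}^{(n)}$ has $i$-th entry $a_i/(P_{D_i}h_{ii})$ on $\mathcal{A}^{(n)}$, $1$ on $\mathcal{S}^{(n)}$, and $0$ on the remaining indices.

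I do not anticipate a substantive obstacle; the argument is essentially a careful case analysis of the closed form. The only bookkeeping subtlety is that neighboring polyhedra share lower-dimensional faces (e.g., $a_i - s_i = 0$ is compatible with both the inactive and the active label for $i$), so the $\{\mathcal{P}_n\}$ form a partition only up to these measure-zero interfaces. This is harmless: since $f_L$ is already continuous, the affine expressions on adjacent pieces must agree on the common boundary, so the value of $f_L$ is unambiguous regardless of the boundary convention used to make the cover disjoint.
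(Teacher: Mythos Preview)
Your argument is correct. Part 2 is essentially identical to the paper's proof: the paper also partitions $\mathcal{X}$ by the three-way classification $a_i-s_i\le 0$, $0<a_i-s_i<P_{D_i}h_{ii}$, $a_i-s_i\ge P_{D_i}h_{ii}$, observes that finiteness of $\mathcal{D}$ gives finitely many regions, and reads off the affine form $\mathbf{B}^{(n)}\mathbf{Gx}+\mathbf{b}^{(n)}$ from (\ref{eq:optsol-D2D}) together with $\mathbf{s}=\mathbf{Gx}$.

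The one genuine difference is in part 1. The paper invokes the Maximum Theorem (Berge) applied to $U_i(\mathbf{x}_i;\mathbf{x}_{-i},\boldsymbol{\mu})$ to conclude continuity of the argmax correspondence, whereas you argue directly from the closed form: affine map composed with the continuous clip $[\cdot]_0^1$. Your route is more elementary and self-contained, since the explicit best response (\ref{eq:optsol-D2D}) is already in hand and no appeal to a general theorem is needed. The Maximum Theorem approach is slightly more robust in that it would survive without the closed form (e.g., if one only knew $U_i$ is continuous and strictly concave in $x_i$), but here that generality is not needed. Both are valid; yours is the shorter path given the available ingredients.
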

\begin{proof}
See Appendix \ref{pf:prop-BRfunc-property}.
\end{proof}
We assume that the resource allocation is carried out well during the channel coherence time, and thus channel can be regarded as static during resource allocation updates. We leave the stochastic channel analysis as future work.  Defining the matrix norm of a  matrix~$\mathbf{M}$ induced by the vector norm $\| \cdot\|$ as $\|\mathbf{M}\|\defeq \max\left\{\|\mathbf{Mx}\|: \|\mathbf{x}\|=1 \right\}$ \cite{ShuLeu07}, a sufficient condition for the convergence of the proposed algorithm with general matrix norms can be found as follows, leveraging the techniques used  in  Theorem 7 in \cite{ShuLeu07}.

\begin{theo}\label{theo:f-contractmap}
If $\|\mathbf{M}_n\|<1$, we have
\begin{enumerate}
\item the synchronous iterative algorithm converges for any initial resource allocation;
\item there is a unique fixed point $\mathbf{x}^*$;
\item $\|\mathbf{x}{(t)} -\mathbf{x}^*\| \leq \eta^t \|\mathbf{x}{(0)} -\mathbf{x}^*\|$, where $\eta = \max_n \|\mathbf{M}_n\|$. The upper bound of the convergence rate is~$\eta$.
\end{enumerate}
\end{theo}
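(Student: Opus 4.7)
The plan is to recognize this as a standard application of the Banach fixed-point theorem to the piecewise affine map $f_L: \mathcal{X} \to \mathcal{X}$, where the only nontrivial work is extending the per-piece linear contraction bound to arbitrary pairs of points that may straddle several polyhedral regions. Once a global contraction constant $\eta = \max_n \|\mathbf{M}^{(n)}\| < 1$ is established, all three conclusions (convergence from any starting point, uniqueness of the fixed point, and the geometric rate bound) follow from standard fixed-point theory, since $\mathcal{X} = [0,1]^{N_D}$ is compact and hence complete, and $f_L$ maps $\mathcal{X}$ into itself by Proposition~\ref{prop:BRfunc-property}(1).

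First I would fix arbitrary $\mathbf{x}, \mathbf{y} \in \mathcal{X}$ and aim to prove the global contraction inequality $\|f_L(\mathbf{x}) - f_L(\mathbf{y})\| \leq \eta \|\mathbf{x} - \mathbf{y}\|$. The naive bound using $f_L(\mathbf{z}) = \mathbf{M}^{(n)}\mathbf{z} + \mathbf{b}^{(n)}$ only works when both $\mathbf{x}$ and $\mathbf{y}$ lie in the same polyhedron $\mathcal{P}_n$, in which case $\|f_L(\mathbf{x})-f_L(\mathbf{y})\| = \|\mathbf{M}^{(n)}(\mathbf{x}-\mathbf{y})\| \leq \|\mathbf{M}^{(n)}\| \|\mathbf{x}-\mathbf{y}\| \leq \eta \|\mathbf{x}-\mathbf{y}\|$. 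To handle the general case I would parametrize the segment $\mathbf{z}(\tau) = (1-\tau)\mathbf{x} + \tau\mathbf{y}$ for $\tau \in [0,1]$. Since there are finitely many polyhedra $\mathcal{P}_1,\ldots,\mathcal{P}_d$ covering $\mathcal{X}$ by Proposition~\ref{prop:BRfunc-property}(2a), this segment meets their boundaries at finitely many parameters $0=\tau_0 < \tau_1 < \cdots < \tau_m = 1$, yielding break-points $\mathbf{z}_i = \mathbf{z}(\tau_i)$ such that each subsegment $[\mathbf{z}_i, \mathbf{z}_{i+1}]$ lies in the closure of a single polyhedron $\mathcal{P}_{n_i}$.

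The key step is then a telescoping estimate: using continuity of $f_L$ from Proposition~\ref{prop:BRfunc-property}(1) together with affineness on each closed piece,
\begin{equation*}
\|f_L(\mathbf{x}) - f_L(\mathbf{y})\| \leq \sum_{i=0}^{m-1} \|f_L(\mathbf{z}_i) - f_L(\mathbf{z}_{i+1})\| = \sum_{i=0}^{m-1} \|\mathbf{M}^{(n_i)}(\mathbf{z}_i - \mathbf{z}_{i+1})\| \leq \eta \sum_{i=0}^{m-1} \|\mathbf{z}_i - \mathbf{z}_{i+1}\|.
\end{equation*}
Because the $\mathbf{z}_i$ are colinear and ordered along the segment from $\mathbf{x}$ to $\mathbf{y}$, the terms $\|\mathbf{z}_i-\mathbf{z}_{i+1}\|$ sum exactly to $\|\mathbf{x}-\mathbf{y}\|$, giving the desired global bound. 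This is the main obstacle in the argument, and it is precisely where the piecewise affine structure established in Proposition~\ref{prop:BRfunc-property} is essential; in particular, the assumption $\|\mathbf{M}^{(n)}\| < 1$ for \emph{every} piece (not just the pieces visited by the iteration) cannot be weakened, because a line segment may cross any of the $\mathcal{P}_n$'s.

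With the contraction inequality in hand, the remaining conclusions follow essentially automatically. Applying the Banach fixed-point theorem on the complete metric space $\mathcal{X}$ yields a unique fixed point $\mathbf{x}^*$ of $f_L$, establishing (2); the Picard iterates $\mathbf{x}(t+1) = f_L(\mathbf{x}(t))$ converge to $\mathbf{x}^*$ from any $\mathbf{x}(0) \in \mathcal{X}$, establishing (1); and iterating the contraction bound gives $\|\mathbf{x}(t) - \mathbf{x}^*\| = \|f_L(\mathbf{x}(t-1)) - f_L(\mathbf{x}^*)\| \leq \eta \|\mathbf{x}(t-1)-\mathbf{x}^*\| \leq \cdots \leq \eta^t \|\mathbf{x}(0)-\mathbf{x}^*\|$, establishing (3).
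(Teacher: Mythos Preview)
Your argument is correct and follows essentially the same route as the paper: establish that $f_L$ is a contraction on the complete space $\mathcal{X}=[0,1]^{N_D}$ and then invoke the Banach fixed-point theorem for all three conclusions. The paper's proof in fact omits the details of the contraction step, deferring to Theorem~7 of \cite{ShuLeu07}; your line-segment telescoping across the finitely many polyhedral pieces is exactly the standard way to fill in that gap, and your use of Proposition~\ref{prop:BRfunc-property} (continuity and piecewise affineness) is precisely what is needed.
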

\begin{proof}
If $f_L$ is a contraction mapping, 
then global stability follows from the Banach Fixed Point Theorem (see, e.g.\cite{Gra03}). The proof of contraction mapping is similar to the proof of Theorem 7 in~\cite{ShuLeu07}, and thus we ignore the details.  Given that $f_L$ is a contraction mapping, we have 
$
|f_L(\mathbf{x}') - f_L(\mathbf{x})\| \leq  \eta \|\mathbf{x}-\mathbf{x}'\|.
$ 
 The rate of convergence for a sequence $\{x_n\}$ converging to $L$ is defined as the $\lim_{n\rightarrow \infty} \frac{|x_{n+1}-L|}{|x_n-L|}$. Observing the above  inequality, we conclude that the convergence rate of the BR Algorithm is upper bounded by~$\eta$.
\end{proof}
The number of polyhedral regions that partition  the domain of $f_L$ is $O(3^{N_D})$, which is very large, and thus it is impractical to check the conditions in Theorem \ref{theo:f-contractmap} directly for all regions. We further provide sufficient conditions in Proposition \ref{prop:f-convergecond} that are easy to apply.
\begin{prop}\label{prop:f-convergecond}
If the matrix $\mathbf{G}$ satisfies
$
\|G\| \leq \min_{i,k} P_{D_i}h_{ii}^{(k)},
$
then the algorithm converges to the unique fixed point regardless of the initial point.
\end{prop}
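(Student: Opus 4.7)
The plan is to reduce Proposition \ref{prop:f-convergecond} to the hypothesis $\|\mathbf{M}^{(n)}\|<1$ of Theorem \ref{theo:f-contractmap} by deriving a uniform upper bound on $\|\mathbf{M}^{(n)}\|$ that does not depend on the polyhedral region index $n$. Since Theorem \ref{theo:f-contractmap} already delivers global convergence to a unique fixed point once that hypothesis is verified, the real work lies in exploiting the structural form $\mathbf{M}^{(n)}=\mathbf{B}^{(n)}\mathbf{G}$ given by Proposition \ref{prop:BRfunc-property}.

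First I would apply submultiplicativity of the induced matrix norm to obtain
\begin{equation*}
\|\mathbf{M}^{(n)}\|=\|\mathbf{B}^{(n)}\mathbf{G}\|\leq \|\mathbf{B}^{(n)}\|\,\|\mathbf{G}\|.
\end{equation*}
Since $\mathbf{B}^{(n)}$ is a diagonal matrix whose nonzero entries are exactly $-1/(P_{D_i}h_{ii})$ for $i\in\mathcal{A}^{(n)}$ and zero otherwise, its operator norm (for any $\ell_p$-induced norm) equals the largest magnitude of its diagonal entries, hence
\begin{equation*}
\|\mathbf{B}^{(n)}\|=\max_{i\in\mathcal{A}^{(n)}}\frac{1}{P_{D_i}h_{ii}}\leq \frac{1}{\min_{i}P_{D_i}h_{ii}}.
\end{equation*}
Combining the two inequalities yields $\|\mathbf{M}^{(n)}\|\leq \|\mathbf{G}\|/\min_i P_{D_i}h_{ii}$, which is bounded by $1$ under the hypothesis of the proposition, uniformly over all polyhedra $\mathcal{P}_n$ (recall that by Proposition \ref{prop:BRfunc-property} there are only finitely many such regions, so taking the maximum over $n$ is harmless). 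Invoking Theorem \ref{theo:f-contractmap} then finishes the argument, with the contraction constant $\eta$ inheriting the bound $\|\mathbf{G}\|/\min_i P_{D_i}h_{ii}$.

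The main obstacle I anticipate is the boundary case of strict versus non-strict inequality: Theorem \ref{theo:f-contractmap} requires $\|\mathbf{M}^{(n)}\|<1$, whereas the hypothesis $\|\mathbf{G}\|\leq \min_{i,k}P_{D_i}h_{ii}^{(k)}$ only guarantees $\|\mathbf{M}^{(n)}\|\leq 1$. In the generic case $\|\mathbf{G}\|<\min_i P_{D_i}h_{ii}$ the chain of inequalities is strict and the contraction-mapping conclusion is immediate; in the borderline case one can either strengthen the hypothesis to a strict inequality, or observe that because $\mathbf{B}^{(n)}$ has genuinely zero diagonal entries outside $\mathcal{A}^{(n)}$ (and because the active set $\mathcal{A}^{(n)}$ cannot contain every link simultaneously with the saturation constraint binding), the factor $\|\mathbf{B}^{(n)}\|$ is strictly smaller than $1/\min_i P_{D_i}h_{ii}$ on every region of interest. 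Either way, the final statement — convergence to the unique fixed point from any initialization — follows directly from Theorem \ref{theo:f-contractmap}, and no further machinery is needed.
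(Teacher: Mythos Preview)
Your proposal is correct and follows essentially the same route as the paper: invoke $\mathbf{M}^{(n)}=\mathbf{B}^{(n)}\mathbf{G}$ from Proposition~\ref{prop:BRfunc-property}, apply submultiplicativity, bound the diagonal $\|\mathbf{B}^{(n)}\|\leq 1/\min_{i}P_{D_i}h_{ii}$, and then appeal to Theorem~\ref{theo:f-contractmap}. Your discussion of the strict-versus-nonstrict inequality is actually more careful than the paper's own proof, which silently passes from $\|\mathbf{M}^{(n)}\|\leq 1$ to the strict hypothesis of Theorem~\ref{theo:f-contractmap}.
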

\begin{proof}
According to Prop. \ref{prop:BRfunc-property}, we have $\mathbf{M}_n=\mathbf{B}^{(n)}\mathbf{G}$. To make $f_L$  a contraction mapping, we have to satisfy $\|\mathbf{B}^{(n)}\mathbf{G}\|\leq 1$. By the property of matrix norm that $\|\mathbf{AB}\|\leq \|\mathbf{A}\|\cdot \|\mathbf{B}\|$, we obtain a sufficient condition that is $\|\mathbf{B}^{(n)}\|\cdot \|\mathbf{G}\|\leq 1$. Matrix $\mathbf{B}^{(n)}$ is a diagonal matrix, whose norm is
$
\|\mathbf{B}^{(n)}\| \leq \max_{i} \frac{1}{P_{D_i}h_{ii}},\ \forall n.
$
Then we can get one sufficient condition as $\|G\| \leq \left( \max_{i,k} \frac{1}{P_{D_i}h_{ii}^{(k)}}\right)^{-1} = \min_{i,k} P_{D_i}h_{ii}^{(k)}$.
\end{proof}
\textbf{Design interpretations.} The above result is true for any general $l_p$ norm with $p\geq 1$. As in \cite{ShuLeu07}, we apply it to some special matrix norms and give the corresponding interpretations as follows.

\noindent\textit{Example 1 ($l_1$ norm)}. We have $\|\mathbf{G}\|_1 = \max\left\{ \|\mathbf{Gx}\|_1: \|\mathbf{x}\|_1=1\right\}=\max\left\{\sum_{i=1} |s_i|\right\}$. This implies that a sufficient condition for the convergence of the LB Algorithm is that no D2D transmitter causes very strong interference to other D2D links.

\noindent\textit{Example 2 ($l_\infty$ norm)}.  We have $\|\mathbf{G}\|_\infty = \max\left\{ \|\mathbf{Gx}\|_\infty :\|\mathbf{x}\|_\infty=1\right\}=\max\left\{\max_i |s_i|\right\}$. This  implies that a sufficient condition for the convergence  is that no D2D receiver suffers excessive interference. 

\begin{figure}
\centering
		\subfigure[$\mu=80$dB.]{
			\label{fig:location80}
			\includegraphics[height=5.9cm]{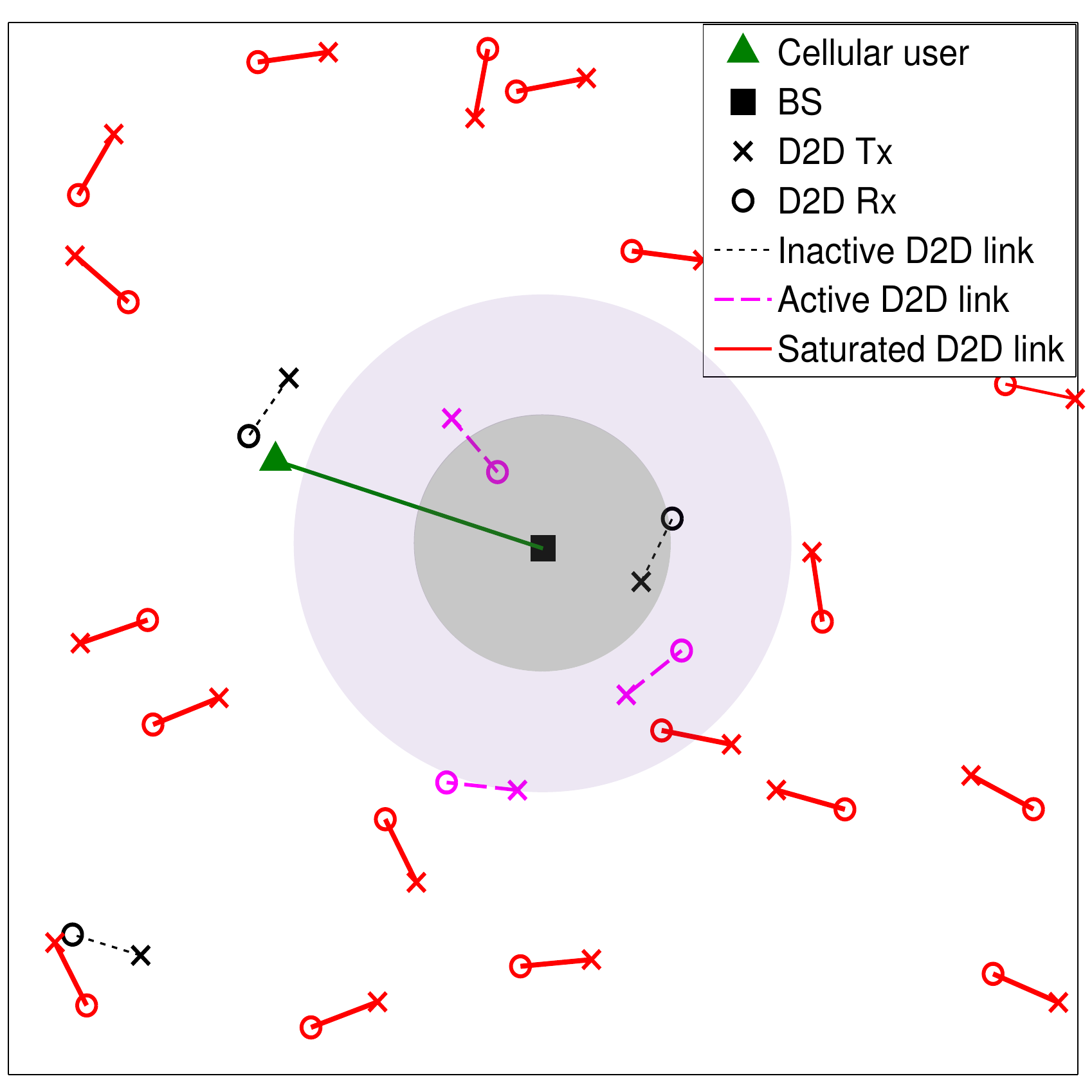}}
		\subfigure[$\mu=90$dB.]{
			\label{fig:location90}
			\includegraphics[height=6cm]{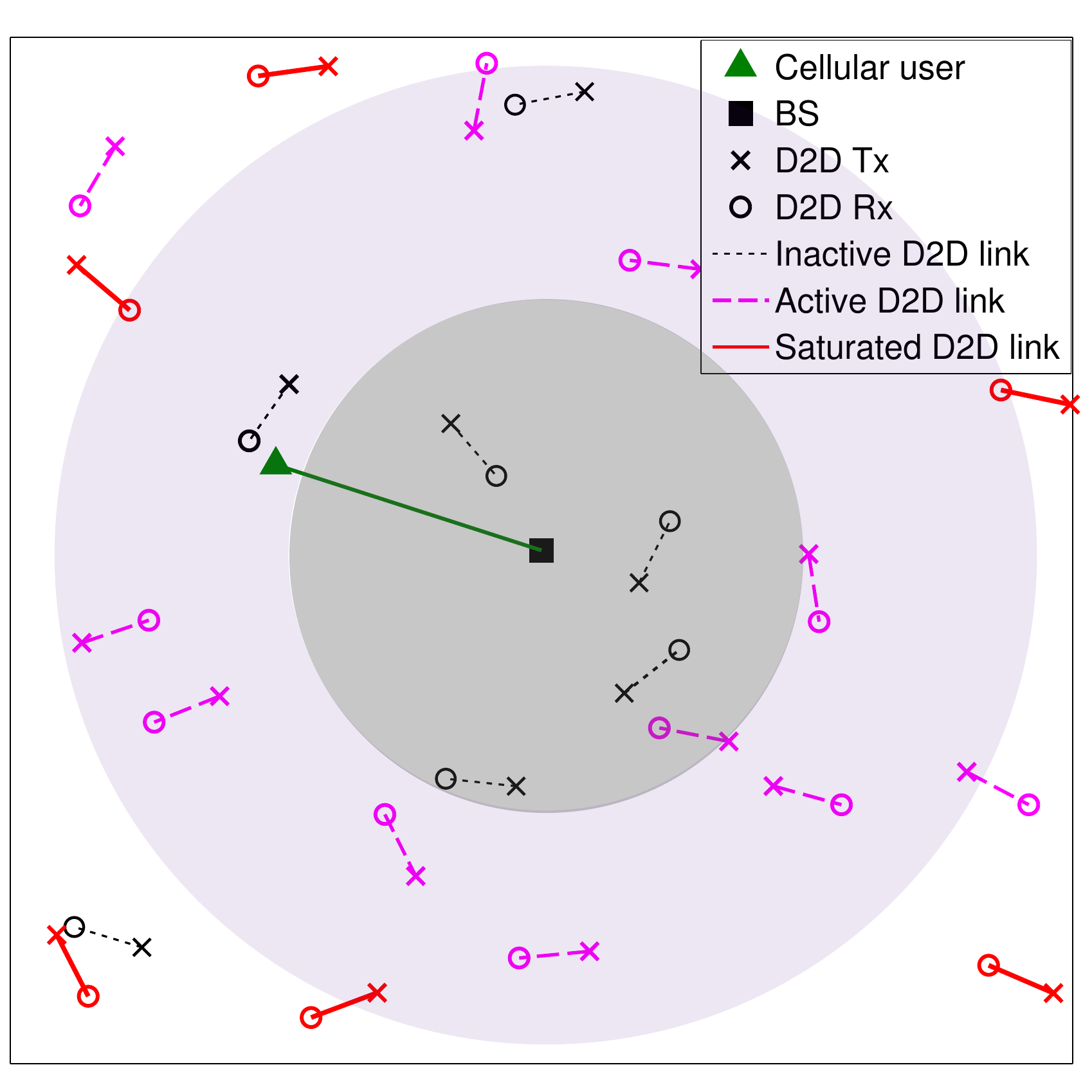}}
\caption{The access probabilities of D2D links vs. $\mu$. The areas in the dark shade  show the locations of silent D2D links with $x_i=0$. The light shaded areas  show the locations of active D2D links (i.e., $x_i\in(0,1)$). The remaining parts show the locations of saturated D2D links (i.e., $x_i=1$).}\label{fig:location}
\end{figure}
We show examples of D2D access probabilities obtained by the BR Algorithm versus different $\mu$  in Fig. \ref{fig:location}. We can observe that the BR Algorithm discourages D2D links whose transmitters are near the BS from accessing the RB. 
In addition, the BR Algorithm takes  into account the SINR of D2D links, and encourages a D2D link far from the BS to keep silent if there are many D2D links nearby,  to decrease the interference in D2D networks. Comparing the two subfigures, we conclude that  less D2D links would be active as $\mu$ increases. This suggests the potential effectiveness of $\mu$, which is investigated in the following section.

\section{Upper Problem: Network's Pricing Mechanism}\label{sec:uppergame}
It is difficult to analyze the upper problem by plugging (\ref{eq:bestx}) directly into (\ref{eq:opt-BS}), due to the complex term in the denominator. Simulation results in Section \ref{sec:simulation} show that the performance of the LB Algorithm for problem (\ref{eq:opt-D2D-cvx}) is very close to the performance of the BR Algorithm for the original problem (\ref{eq:D2Dopt}). This suggests to  approximate the solution of  (\ref{eq:D2Dopt}) to the solution of the lower bound problem (\ref{eq:optsol-D2D}). By this approximation, we propose an algorithm leveraging techniques from LCP \cite{CotPang92}. We further propose a bisection algorithm, which has low overhead and can be applied to the original two-stage problem (with the lower problem~(\ref{eq:D2Dopt})). In addition, motivated by the results of the lower problem (see e.g., Fig.~\ref{fig:location}), we propose a simple greedy heuristic algorithm, which can be applied to the network with high interference tolerance level.

\subsection{An Equivalent Upper Problem}
According to Lemma~1 in~\cite{RazLuo11},  the upper problem is equivalent to the following problem: 
\begin{equation}\label{eq:subproblem}
\begin{aligned}
\max_{\mu \geq 0} \ & \min\left\{\mu \sum_{i\in\mathcal{D}}x_{i}P_{D_i}g_{ii}, \ \mu Q\right\}\\
\text{s.t. } & x_{i} =  x_i^*,
\end{aligned}
\end{equation}
where $x_i^*$ is given by (\ref{eq:bestx}).
 For simplicity, we use the notation $0\leq a \perp b \geq 0$ to represent the complementarity condition of $a$ and $b$, i.e., $ab=0$ and $a,b\geq0$ \cite{CotPang92}. Letting $I_{Ci} = \sum_{j\in\mathcal{C}} P_{C_j} h_{ji} + W_{i}$, and  $\lambda_i$ be the Lagrange multiplier to relax the constraint $x_i \leq 1$, we have the following lemma.
\begin{lemma}
Denoting $t_i = \frac{w_iP_{D_i}h_{ii}}{u P_{D_i}g_{ii}\left(uP_{D_i}g_{ii} +\lambda_i\right)}$, (\ref{eq:optsol-D2D}) is equivalent to the following parametric LCP  with variables $(x_i, t_i)$ and  parameter $\mu$~\cite{CotPang92,RazLuo11}:
\begin{equation}\label{eq:LCP}
\begin{aligned}
& 0\leq  x_{i} \perp \left(t_i - \frac{w_i h_{ii} }{\mu g_{ii}}+ \sum_{j\in\mathcal{D}} P_{D_j} h_{ji}x_j +I_{Ci}\right)\geq 0,\\
& 0 \leq t_i \perp (1-x_i) \geq 0.
\end{aligned}
\end{equation}
\end{lemma}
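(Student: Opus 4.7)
The plan is to derive the LCP directly from the Karush--Kuhn--Tucker (KKT) conditions of the convex per-link problem (\ref{eq:opt-D2D-cvx}) and to verify the equivalence by case analysis on where $x_i$ lies in $[0,1]$. Concretely, attach multipliers $\lambda_i\geq 0$ to $x_i\leq 1$ and $\nu_i\geq 0$ to $-x_i\leq 0$. After rearrangement, stationarity reads
\[
\frac{w_iP_{D_i}h_{ii}}{L_i}=\mu P_{D_i}g_{ii}+\lambda_i-\nu_i, \qquad L_i \defeq x_iP_{D_i}h_{ii}+s_i+I_{Ci},
\]
together with $\lambda_i(1-x_i)=0$ and $\nu_i x_i=0$. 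The auxiliary variable $t_i$ is, in effect, a reparametrization of $\lambda_i$ chosen so that $t_i\geq 0$ and $t_i=0 \iff \lambda_i=0$; the definition in the statement is constructed precisely to have this property.

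First I would show that any $x_i$ given by (\ref{eq:optsol-D2D}) admits a $t_i\geq 0$ for which (\ref{eq:LCP}) holds. In the interior case $x_i\in(0,1)$ one has $\lambda_i=\nu_i=0$, hence $t_i=0$, and stationarity gives $L_i = w_ih_{ii}/(\mu g_{ii})$, which is exactly the vanishing of the first bracket in (\ref{eq:LCP}) once one notes that $\sum_{j\in\mathcal{D}}P_{D_j}h_{ji}x_j = x_iP_{D_i}h_{ii}+s_i$. When $x_i=0$, again $\lambda_i=0$ and so $t_i=0$, while $\nu_i\geq 0$ forces $L_i \geq w_ih_{ii}/(\mu g_{ii})$, i.e., the first bracket is nonnegative; the second complementarity is trivial because $1-x_i=1$. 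When $x_i=1$, $\nu_i=0$ and $\lambda_i\geq 0$ may be positive; from stationarity $L_i = w_iP_{D_i}h_{ii}/(\mu P_{D_i}g_{ii}+\lambda_i)$, and a direct algebraic manipulation shows that $w_ih_{ii}/(\mu g_{ii}) - L_i$ coincides with the chosen $t_i$, so the first bracket vanishes, while the second complementarity holds because $1-x_i=0$.

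For the converse direction I would run the same three cases backwards: given any solution of (\ref{eq:LCP}), dichotomize on whether $x_i=0$, $x_i\in(0,1)$, or $x_i=1$, use the two complementarities to recover the corresponding $(\lambda_i,\nu_i)\geq 0$, and observe that the resulting triple $(x_i,\lambda_i,\nu_i)$ satisfies the KKT system; since (\ref{eq:opt-D2D-cvx}) is convex with strictly concave objective in $x_i$, its unique KKT solution coincides with the closed form (\ref{eq:optsol-D2D}). The main obstacle is the $x_i=1$ case, where one must verify the algebraic identity
\[
\frac{w_ih_{ii}}{\mu g_{ii}} - \frac{w_iP_{D_i}h_{ii}}{\mu P_{D_i}g_{ii}+\lambda_i} \;=\; \frac{w_iP_{D_i}h_{ii}\,\lambda_i}{\mu P_{D_i}g_{ii}\bigl(\mu P_{D_i}g_{ii}+\lambda_i\bigr)},
\]
which, while elementary, is the only nontrivial manipulation in the proof; everything else reduces to careful bookkeeping of complementary slackness.
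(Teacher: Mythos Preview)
Your argument is correct and follows essentially the same route as the paper's: both obtain the LCP by writing the KKT system of (\ref{eq:opt-D2D-cvx}) and absorbing the multiplier $\lambda_i$ into the auxiliary variable $t_i$; the paper's proof merely names the key algebraic step (multiply through by $L_i/(\mu P_{D_i}g_{ii}+\lambda_i)$ and substitute $t_i$) and defers details to \cite{RazLuo11}, whereas you carry out the three-case complementary-slackness analysis explicitly. Your closing identity in the $x_i=1$ case is exactly the substitution the paper alludes to, and it also makes clear that the intended $t_i$ must carry a factor $\lambda_i$ in the numerator so that $t_i=0\iff\lambda_i=0$.
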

\begin{proof}
Details can be found in \cite{RazLuo11}. Key steps are to multiply (\ref{eq:optsol-D2D}) by $\frac{\sum_{j\in\mathcal{D}} P_{D_j} h_{ji}x_j +I_{Ci}}{\mu P_{D_i}g_{ii} + \lambda_i}$ and change the variable $t_i$.
\end{proof}
 
In the following, we explore properties of the objective function in (\ref{eq:subproblem}), which provides clues to design efficient algorithms for the upper problem.

\subsection{Algorithm Design for the Upper Problem}
In this paper, we use the symmetric parametric principle pivoting algorithm (SPPP) -- a classical algorithm for parametric LCP  \cite{CotPang92} -- to find the optimal $\mu$ and its corresponding feasible solutions $x_i$ in (\ref{eq:LCP}).  
We write (\ref{eq:LCP}) in matrix form as $\mathbf{0}\leq \mathbf{y}\perp \mathbf{Ay + q} + \nu\mathbf{d}\geq 0$,
where $\nu=\frac{ 1}{\mu}$, $\mathbf{y} = [x_1, \dots, x_{N_D}, t_1, \dots, t_{N_D} ]^T$, $\mathbf{q} = [I_{C1} \dots, I_{CN_D} , 1, \dots, 1]^T$, $\mathbf{d} =[-\frac{w_1 h_{11}}{ g_{11}},\dots, - \frac{w_{N_D} h_{N_DN_D}}{ g_{N_DN_D}}, 0, \dots, 0]^T$,  $\mathbf{A}_0$ is a matrix with $(i,j)$th element $P_{D_j}h_{ji}$, and $\mathbf{A}=\bigl[\begin{smallmatrix}
\mathbf{A}_0 & \quad \mathbf{I}\\
-\mathbf{I}  &\quad  \mathbf{0}
\end{smallmatrix} \bigr]$.
Note that $\mu\geq 0$ implies that $\nu \geq 0$. We set a upper limit for $\nu$, denoted by $\bar{\nu}<\infty$, which is a sufficient large real number. The SPPP is given by Algorithm \ref{algo:SPPP}.
\begin{algorithm}
  \caption{SPPP-based algortihm}\label{algo:SPPP}
  \begin{algorithmic}[1]
    \State Initialization:  $\tau=0$,  $\mu^*=0$,  $U^*=0$, $\mathbf{q}(\tau)=\mathbf{q}$, $\mathbf{d}(\tau) = \mathbf{d}$, $\mathbf{A}(\tau) = \mathbf{A}$, $\nu(\tau) = 0$, $\mathbf{y}(\tau)=\mathbf{1}$, and
    
     \quad \quad \quad \ \  $\mathbf{z}(\tau) = \mathbf{q+\nu(\tau) d + Ay(\tau)}$;
    
\Statex $\triangleright$ {\textbf{comment}: critical value}
        
    \State Determine the next critical value of $\mu$: 
    \begin{equation*}
    \nu(\tau+1) =\min\left\{ \min_i \left\{-\frac{q_i(\tau)}{d_i(\tau)}:d_i(\tau) < 0\right\}, \bar{\nu} \right\};
    \end{equation*}

    \State Set $\left(\mathbf{z}(\tau+1), \mathbf{y}(\tau+1)\right) =  \left(\mathbf{q}(\tau) + \nu\mathbf{d}(\tau), 0 \right)$ for all $\nu\in[\nu(\tau), \nu(\tau+1) ]$;   
     \If{ $\nu(\tau+1)=\bar{\nu}$}	
     		\State stop;
     \Else
     		\State let $r=\arg\min_i \left\{ -\frac{q_i(\tau)}{d_i(\tau)}:d_i(\tau) < 0\right\}$;
     \EndIf
   \State The new critical value of $\nu$ is $\nu(\tau+1) = - q_r(\tau)/d_r(\tau) $, and thus $\mu(\tau + 1) = (\nu(\tau+1))^{-1}$;
    
    \Statex $\triangleright$ {\textbf{comment}: pivoting}

   \If{$A_{rr}(\tau)>0}$
   		\State pivot $<z_r(\tau), y_r(\tau)>$;
   		\State let $z_r(\tau+1) = y_r(\tau)$,  $y_r(\tau+1) = z_r(\tau)$;
   		\State let  $z_i(\tau+1) = z_i(\tau)$, $y_i(\tau+1) = y_i(\tau)$, for $i\neq r$;
   		\State let $\tau = \tau + 1$, and return to Step 2;
	
	\ElsIf {$A_rr(\tau)=0$ }
		\State use $y_r(\tau)$ as a driving variable and determin the basic blocking variable $z_s(\tau)$;
		\State pivot  $<z_s(\tau), y_r(\tau)>$, $<z_r(\tau), y_s(\tau)>$;
   		\State let $z_s(\tau+1) = y_r(\tau)$,  $y_s(\tau+1) = z_r(\tau)$, $z_r(\tau+1) = y_s(\tau)$,  $y_r(\tau+1) = z_s(\tau)$;
   		\State let $z_i(\tau+1) = z_i(\tau)$, $y_i(\tau+1) = y_i(\tau)$, for $i\neq r, s$;
   		\State let $\tau = \tau + 1$, and return to Step 2;
     \EndIf

    \State get $x_i(\tau+1)$ from $y_i(\tau+1)$;
    \State let $U$ be (\ref{eq:subproblem}) at $\mu(\tau+1)$ and $x_i(\tau+1)$;
    \If {$U > U^*$}
    	\State let $U^* = U$, $\mu^*=\mu(\tau+1)$;
   	 \EndIf  
    
  \end{algorithmic}
\end{algorithm}

Since the SPPP Algorithm requires CSI between each transmitter and receiver (i.e., matrix $\mathbf{A}$), which may cause high overhead,  the result of SPPP can be used as a performance benchmark, and another low-overhead algorithm is desirable. To propose such algorithms, we first explore the properties of the objective function in (\ref{eq:subproblem}), which is denoted by $U_c \defeq \min\{ U_{c1}, U_{c2} \}$, where $U_{c1} =\mu \sum_{i\in\mathcal{D}}x_{i}P_{D_i}g_{ii}$ and $U_{c2}=  \mu Q$. Function $U_{c2}$ is a linear increasing function of $\mu$, while  $U_{c1}$ is more complicated since it involves solving (\ref{eq:optsol-D2D}). The properties of function $U_{c1}$ are given by Proposition \ref{prop:U1-property}, leveraging the techniques in \cite{RazLuo11}.
\begin{prop}\label{prop:U1-property}
The function $U_{c1}(\mu)$  has the following properties:
\begin{enumerate}
\item $U_{c1}$ is a continuous function of $\mu$;
\item $U_{c1}$ is piecewise affine;
\item  If  $\sum_{j\in\mathcal{D}, j\neq i} \frac{ h_{ij}}{h_{jj}} g_{jj} < g_{ii},\ \forall i\in\mathcal{D}$, and $\sum_{j\in\mathcal{D}_s}P_{D_j}\left(h_{ji}- \frac{h_{1i}}{g_{11}} g_{jj}\right)\geq 0, \ \forall i\in\mathcal{D}_a$, 
then $U_{c1}$ is a non-increasing function.
\end{enumerate}
\end{prop}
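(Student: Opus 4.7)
The plan is to exploit the parametric LCP structure that the preceding lemma established. Writing $\nu = 1/\mu$, the system (\ref{eq:LCP}) is a standard parametric LCP in $\nu$, and classical parametric-LCP theory \cite{CotPang92} --- precisely the theory underlying Algorithm \ref{algo:SPPP} --- guarantees that there are finitely many critical values $0 = \nu_0 < \nu_1 < \cdots < \nu_L \leq \bar{\nu}$ such that on each subinterval $[\nu_{n-1},\nu_n]$ the partition of indices $(\mathcal{D}_z^{(n)}, \mathcal{D}_a^{(n)}, \mathcal{D}_s^{(n)})$ into silent, active, and saturated links is fixed and the solution $x_i^*(\nu)$ is affine in $\nu$. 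This piecewise structure will serve as the organizing skeleton for all three parts.

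For (1) and (2), on piece $n$ the components are $x_i^* = 0$ (silent), $x_i^* = 1$ (saturated), or $x_i^* = \alpha_i^{(n)} + \beta_i^{(n)}\nu$ (active). Multiplying by $\mu$ and summing against $\{P_{D_i}g_{ii}\}$ gives $U_{c1}(\mu) = A_n\mu + B_n$ on piece $n$, proving piecewise affinity. Continuity across critical values follows from continuity of the parametric LCP map, itself a consequence of uniqueness of the NE under the contraction hypothesis of Proposition \ref{prop:f-convergecond} and, more generally, from the maximum-theorem continuity of the best-response map noted in Section \ref{sec:D2Dgame1}.

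For (3) I need to show $A_n \leq 0$ on every piece, where
\begin{equation*}
A_n = \sum_{i\in\mathcal{D}_s^{(n)}} P_{D_i}g_{ii} + \sum_{i\in\mathcal{D}_a^{(n)}} \alpha_i^{(n)} P_{D_i}g_{ii}.
\end{equation*}
On piece $n$ the equalities in (\ref{eq:optsol-D2D}) for active $i$, multiplied through by $\mu$, read
\begin{equation*}
\mu\Bigl( P_{D_i}h_{ii}x_i^* + \sum_{j\in\mathcal{D}_a^{(n)},j\neq i} P_{D_j}h_{ji}x_j^* + \sum_{j\in\mathcal{D}_s^{(n)}} P_{D_j}h_{ji} + I_{Ci}\Bigr) = \frac{w_i h_{ii}}{g_{ii}}.
\end{equation*}
Differentiating in $\mu$ yields a linear system for the slopes $\alpha_i^{(n)} = \partial(\mu x_i^*)/\partial\mu$ whose coefficient matrix is the principal submatrix of $\mathbf{G}$ indexed by $\mathcal{D}_a^{(n)}$. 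The first hypothesis $\sum_{j\neq i}(h_{ij}/h_{jj})g_{jj} < g_{ii}$ is a diagonal-dominance condition that renders this submatrix a nonsingular $M$-matrix, so each $\alpha_i^{(n)}$ can be expressed as a signed linear combination of the saturated-link data $\{P_{D_j}h_{ji}, P_{D_j}g_{jj}\}_{j\in\mathcal{D}_s^{(n)}}$. Substituting this expression into the formula for $A_n$ and invoking the second hypothesis on the saturated-to-active interference ratios then forces $A_n \leq 0$, and combined with the continuity of $U_{c1}$ from part (1), piecewise monotonicity promotes to global monotonicity on $[0,\infty)$.

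The main obstacle I anticipate is the final step of (3): turning the implicit relation for $\alpha_i^{(n)}$ into a clean sign statement using \emph{both} hypotheses simultaneously. The appearance of the index ``$1$'' inside the second hypothesis strongly suggests that one only needs to certify $A_n \leq 0$ at the moment a single link transitions --- i.e., across a critical value $\nu_n$ --- since each pivot of Algorithm \ref{algo:SPPP} alters only one basic variable. Organizing the slope analysis around a single pivot reduces the global argument to a local one and mirrors the technique used in \cite{RazLuo11}; the diagonal-dominance hypothesis is precisely what guarantees that the matrix inverted during each pivot is an $M$-matrix, from which the sign pattern required for monotonicity is read off.
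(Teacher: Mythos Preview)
Your treatment of parts (1) and (2) is essentially the paper's: both lean on the fact that, on each region where the index sets $(\mathcal{D}_a,\mathcal{D}_s)$ are fixed, the active $x_i^*$ solve a linear system whose right-hand side is affine in $1/\mu$, so $\mu\sum_i x_i^*P_{D_i}g_{ii}$ is affine in $\mu$; continuity then comes from continuity of the best-response/LCP map. No issue there.

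Part (3) is where your plan drifts. Two concrete points. First, the linear system governing the active block is not controlled by a principal submatrix of $\mathbf{G}$; what the first hypothesis actually gives is strict \emph{column} diagonal dominance of the conjugated matrix $\mathbf{H}_{aa}^{(\boldsymbol{\beta})}=\mathrm{diag}(\boldsymbol{\beta}_a)\,\mathbf{H}_{aa}\,\mathrm{diag}(\boldsymbol{\beta}_a)^{-1}$ with $\boldsymbol{\beta}_a=(P_{D_i}g_{ii})_{i\in\mathcal{D}_a}$, and the conclusion one needs from it is the componentwise inequality $\mathbf{1}^T(\mathbf{H}_{aa}^{(\boldsymbol{\beta})})^{-1}\ge 0$, not merely $M$-matrix invertibility. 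Without conjugating by $\mathrm{diag}(\boldsymbol{\beta}_a)$ the channels $g_{ii}$ never enter, so the first hypothesis cannot be what you claim it is. Second, your reading of the index ``$1$'' in the second hypothesis as a single-pivot artifact is off. In the paper's argument the saturated contribution $\mu\,\boldsymbol{\beta}_s^T\mathbf{1}$ is rewritten as $\mu\,\boldsymbol{\beta}_a^T(\mathbf{H}_{aa})^{-1}\mathbf{H}_{aa}\tilde{\boldsymbol{\beta}}_1(\boldsymbol{\beta}_s^T\mathbf{1})$ with $\tilde{\boldsymbol{\beta}}_1=\bigl(1/(P_{D_1}g_{11}),0,\ldots,0\bigr)$, an algebraic device that folds the saturated sum into the active-matrix framework via a single (here the first) active coordinate. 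The slope of $U_{c1}$ is then $-\mathbf{1}^T(\mathbf{H}_{aa}^{(\boldsymbol{\beta})})^{-1}\mathbf{C}_a^{(\boldsymbol{\beta})}$, and the second hypothesis is exactly the condition $\mathbf{C}_a^{(\boldsymbol{\beta})}\ge 0$. So the proof of (3) is not a pivot-by-pivot induction but a static factorization of the slope into two nonnegative vectors; your differentiation route can be made to land there, but only after you perform the $\boldsymbol{\beta}_a$-conjugation and the $\tilde{\boldsymbol{\beta}}_1$ absorption.
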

\begin{proof}
See Appendix \ref{pf:prop-U1-property}.
\end{proof}

The sufficient conditions given in Proposition \ref{prop:U1-property} to make $U_{c1}$ non-decreasing essentially say that the interference among D2D links and the interference from saturated D2D links (i.e., $x_i=1$) to the BS should be weak. Given by Proposition \ref{prop:U1-property} that $U_{c1}$ is piecewise affine, and $U_{c2}$ is linear, the optimal $\mu^*$ must either be at a \emph{break point} of $U_{c1}$ -- the discontinuous points in the derivative of $U_{c1}$ --  or at the intersection of $U_{c1}$ and $U_{c2}$ \cite{RazLuo11}. When $U_{c1}$ is non-decreasing, the optimal $\mu^*$ must be at the intersection of  $U_{c1}$ and $U_{c2}$. This motivates the heuristic algorithm, given by Algorithm~\ref{algo:mono-U1}, which converges to an intersection point of  $U_{c1}$ and $U_{c2}$ \cite{RazLuo11}. 

\begin{algorithm}
  \caption{A bisection algorithm for finding optimal price $\mu^*$ at monotonic case}\label{algo:mono-U1}
  \begin{algorithmic}[1]
    \State Initialization: given accuracy $\epsilon\geq 0$, let $\mu_u=\mu_{\max}$, and $\mu_l\geq 0$;
    
    \While{ $|\mu_u-\mu_l|\geq \epsilon$ } 
    \State let $\mu_m = \frac{\mu_u + \mu_l}{2}$;
    \State get $x_i(\mu_m)$ by running the LB Algorithm;
    \If{  $U_1(\mu_m, x_i(\mu_m)) \leq U_2(\mu_m, x_i(\mu_m))$ }
    \State $\mu_u=\mu_m$;
    \Else
    \State $\mu_l=\mu_m$;
    \EndIf  
    \EndWhile  
    \State let $\mu^*=\mu_m$.
  \end{algorithmic}
\end{algorithm}
In the bisection algorithm, we set an upper limit for $\mu$, denoted by $\mu_{\max}<\infty$, which is a sufficient large real number. We consider non-trivial cases, where the interference from D2D to BSs, when all D2D links are active, is greater than the interference tolerance level; otherwise, we just let all D2D links access a RB with probability one. Under this assumption, we have the following result.
\begin{prop}\label{prop:bisection}
The bisection algorithm always converges. In particular, the algorithm requires at most $\log_2(\mu_{\max}/\epsilon)$ iterations to converge. 
\end{prop}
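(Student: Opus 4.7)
The plan is to establish Proposition~\ref{prop:bisection} by verifying that Algorithm~\ref{algo:mono-U1} is a standard bisection method applied to the function $\Delta(\mu) \defeq U_{c1}(\mu) - U_{c2}(\mu)$, so that the usual halving-of-intervals convergence argument goes through. The three things I need to check are: (i) the initial bracketing interval is valid, (ii) the invariant ``$[\mu_l,\mu_u]$ contains a sign-change point of $\Delta$'' is preserved by each iteration, and (iii) the interval width is halved at every step.

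For (i), I would argue that the nontrivial-case assumption stated right before the proposition guarantees $\Delta(0) > 0$: at $\mu=0$ all D2D links transmit (by Proposition~\ref{prop:optimalD2D}, $x_i=1$ for all $i$), so $U_{c1}(0) = \sum_i P_{D_i} g_{ii} > Q = U_{c2}(0)/\mu \cdot \mu$ only makes sense after multiplying by $\mu$; more carefully, I take $\mu_l$ arbitrarily small (or start at some $\mu_l$ where the D2D interference still exceeds $Q$) and use monotonicity. For $\mu_u = \mu_{\max}$ sufficiently large, Proposition~\ref{prop:U1-property}(3) ensures $U_{c1}$ is non-increasing and bounded, while $U_{c2}=\mu Q$ grows without bound, so $\Delta(\mu_{\max}) \leq 0$. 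Thus $\Delta$ changes sign on $[\mu_l,\mu_u]$, and by the piecewise-affine continuity of $U_{c1}$ (Proposition~\ref{prop:U1-property} parts~1--2) there exists an intersection $\mu^*\in[\mu_l,\mu_u]$.

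For (ii), the key structural fact I would invoke is monotonicity of $\Delta$: $U_{c1}$ is non-increasing and $U_{c2}$ is strictly increasing, so $\Delta$ is non-increasing. Hence the sign test on $\mu_m=(\mu_u+\mu_l)/2$ in line 5 correctly localizes $\mu^*$: if $U_{c1}(\mu_m)\leq U_{c2}(\mu_m)$ then $\Delta(\mu_m)\leq 0$ and $\mu^*\in[\mu_l,\mu_m]$, so updating $\mu_u\leftarrow \mu_m$ preserves the bracket; otherwise $\mu^*\in[\mu_m,\mu_u]$ and the update $\mu_l\leftarrow \mu_m$ preserves it. This is the only place where a subtle issue could arise, and it will be the main obstacle of the argument, because $U_{c1}(\mu_m)$ is not evaluated exactly but rather through the LB Algorithm in line~4. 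I would appeal to Theorem~\ref{theo:f-contractmap} (under the convergence conditions of Proposition~\ref{prop:f-convergecond}) to ensure the LB Algorithm returns the correct $x_i(\mu_m)$, so that $U_{c1}(\mu_m)$ is computed consistently and the invariant holds.

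For (iii), by construction $|\mu_u^{(t+1)}-\mu_l^{(t+1)}| = \tfrac{1}{2}|\mu_u^{(t)}-\mu_l^{(t)}|$ at every iteration, so after $t$ iterations the interval width is at most $\mu_{\max}/2^t$. The while-loop terminates as soon as this width falls below $\epsilon$, i.e., once $2^t \geq \mu_{\max}/\epsilon$, which gives the claimed bound $t \leq \lceil\log_2(\mu_{\max}/\epsilon)\rceil$. Combining (i)--(iii) yields convergence to a point within $\epsilon$ of the intersection $\mu^*$ in at most $\log_2(\mu_{\max}/\epsilon)$ iterations, completing the proof.
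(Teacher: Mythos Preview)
Your overall three-step scaffold (valid initial bracket, invariant preserved, interval halving) matches the paper's proof exactly, and step~(iii) is identical to what the paper does. The one substantive difference is in how you justify step~(ii): you invoke monotonicity of $\Delta$ via Proposition~\ref{prop:U1-property}(3), whereas the paper uses only continuity (Proposition~\ref{prop:U1-property}(1)) together with the intermediate value theorem. This matters because Proposition~\ref{prop:U1-property}(3) carries side conditions on the channel gains that need not hold in general, so your argument as written does not actually establish that the algorithm \emph{always} converges. The paper instead observes that the algorithm by construction maintains the sign invariant $U_{c1}(\mu_l)\ge U_{c2}(\mu_l)$ and $U_{c1}(\mu_u)\le U_{c2}(\mu_u)$ directly from the branch taken in line~5, and then IVT guarantees an intersection in $[\mu_l,\mu_u]$ at every iteration---no monotonicity needed. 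The same remark applies to your step~(i): you do not need non-increasingness to get $\Delta(\mu_{\max})\le 0$; it is enough that $U_{c1}$ is bounded (which follows from the piecewise-affine structure and $x_i\to 0$ as $\mu\to\infty$) while $U_{c2}=\mu Q$ is unbounded.

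Your concern in~(ii) about whether the LB Algorithm returns the correct $x_i(\mu_m)$ is a fair point about implementation, but the paper's proof simply takes $x_i(\mu_m)$ as given by the lower-stage solution and does not address inner-loop accuracy; so this is extra care on your part rather than something the paper handles.
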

\begin{proof}
See Appendix \ref{pf:prop-bisection}.
\end{proof}
Note that the bisection algorithm also converges when we use the BR Algorithm instead of the LB Algorithm to solve the lower problem, due to that  function $f$ is continuous.  Under the conditions given by Proposition \ref{prop:U1-property}, i.e., the interference among D2D links and the interference from saturated D2D links to BSs are weak, the bisection algorithm achieves the optimal $\mu^*$. In other words, the optimal strategy in this case is to let the number of active D2D links as large as possible, until the total interference from D2D links reaches the tolerance~level.

\textbf{Implementation interpretations.} Adopting the bisection algorithm, the BS first broadcasts a price, and then measures the aggregate interference at this price. If the interference is greater than the tolerance level, the BS increases the price; otherwise, the BS  decreases the price. In fact, the behavior is consistent with the \emph{law of supply and demand}: if the demand (the interference) exceeds the supply (the interference tolerance level), the price increases to make the RB less attractive. The algorithm can also be implemented adaptively. The network locally measures the total D2D interference, and increases (decreases) the price if the interference level is above (below) the predefined tolerance level $Q$, until the interference level reaches $Q$. 
Fig.~\ref{fig:algostructure} illustrates the  structure  of Algorithm~\ref{algo:mono-U1}, which shows that the signaling overhead is caused by the price broadcast and the channel measurements. The overhead due to price broadcast is proportional to the number of RBs, which is quite small. As for the channel measurements, the BS requires the channel information of cellular links,  and each D2D link needs the CSI of  the link between its transmitter to the paired receiver and of the link between the transmitter and the BS. Thus, the algorithm only requires local information and the overhead due to the channel measurements is proportional to the total number of cellular and D2D links, which is much lower than the overhead of centralized algorithms (e.g., the brute force approach or the SPPP Algorithm) that require  global CSI. Note that the channel information updating frequency depends on the channel variance over time, which is quite low in a slow mobility environment (e.g., we may only measure channels once for each or several resource allocation time scales). Therefore, the required overhead is not significant compared to the potential advantages of our algorithm.

As for the complexity, letting $T$ be the number of required iterations for the LB Algorithm, the computational complexity of Algorithm~3 is $O(N_D\log_2(\mu_{\max} / \epsilon) + N_D^2 T)$. The parameters $T$ and $\log_2(\mu_{\max} / \epsilon)$ are generally much smaller  than $N_x^{N_D}$  as illustrated in Section~\ref{sec:simulation}, where $T$ and $\log_2(\mu_{\max} / \epsilon)$ are between 5 and 10, while $N_x^{N_D}$ is $10^{10}$.  Thus, the complexity of Algorithm 3 is much lower than the complexity of the centralized scheme, which is $O(N_x^{N_D} N_D^2)$.

Observing Fig. \ref{fig:location}, D2D links mostly have larger access probabilities when they are far from the BS. This motivates another greedy heuristic algorithm -- called the \emph{IO Algorithm} (short for interference ordering), which needs no iteration. The D2D links are sorted by the interference caused to the BS in an ascending order, i.e., $P_{D1}g_{11} \leq P_{D2}g_{22}\leq\cdots\leq P_{DN_D}g_{N_DN_D}$. The BS lets $x_1=1, \dots, x_n=1$ and other D2D links be silent, where $n$ satisfies $\sum_1^n P_{Di}g_{ii}\leq Q$ and $\sum_1^{n+1} P_{Di}g_{ii}> Q$. Adopting the IO Algorithm, the BS measures the uplink CSI from D2D transmitters, based on which the BS determines the access probabilities. Therefore, this algorithm has lower overhead than the bisection algorithm, and gets the solution more quickly, at the cost of overall performance, which is shown in the following section.

\section{Performance Evaluation}\label{sec:simulation}
We consider an uplink system with a hexagonal BS model. The main simulation parameters are listed as follows,  unless otherwise specified. The BS density is 1 per $\pi \left(500\text{m}\right)^2$. The cellular UEs and D2D links are deployed according to two independent Poisson point processes with the same density $10$ links per macrocell. We let the average length of D2D links be $80$m. We assume the total bandwidth is $10$MHz with $1$MHz per sub-band. The transmitters adopt fractional power control, i.e., $P = \min \left\{  P_{\max},  d^{\kappa\alpha}  \right\}$, where $P_{\max}$ is the maximum transmit power, $d$ is the distance of the link, $\kappa$ is the compensation factor for path loss, and $\alpha$ is the path loss exponent. We let cellular UEs and D2D links have the same  power control factor $\kappa=0.75$. The maximum transmit powers of cellular UEs and D2D links over one sub-band are $200$mW and $20$mW, respectively, due to the fact that cellular UEs only access one sub-band while D2D links can access multiple sub-bands. Note that D2D links may not access all sub-bands, and thus we set a conservative maximum transmit power for D2D links. 
The noise  power spectrum density is $-174$ dBm/Hz. Path loss exponents of UE-UE and UE-BS links are~$4.37$ and $3.76$, respectively. We compare the performance of our proposed algorithms to the scenario where all D2D links are active, as well as the scheme where D2D links become silent when their transmitters are within a circle around their nearest BSs -- called a \emph{guard zone scheme}.

\subsection{The Lower Problem: D2D non-cooperative Game}
In this section, we consider a single cell scenario, where the interference tolerance level is $5$dB above the cellular signal. We investigate the performance of the BR Algorithm and the LB Algorithm. 
Note that the BR Algorithm provides the NE result, which may not be optimal. Due to the complexity to solve (\ref{eq:1stage-opt}) via brute force search ($O(N_x^{N_D} N_D^2)$), we compare the gap between the NE and optimal results in a small network with three D2D links. The average total D2D rates obtained by brute force search and by the BR Algorithm are 3.362 bps/Hz and 3.355 bpz/Hz, respectively. Thus, we observe that the NE solution is near-optimal in small networks, which is mainly due to that the D2D links are active with probability close to one in most cases in the small network. On the other hand,  the D2D links have fractional active probabilities in most cases in the large network, and thus the observation may be quite different in large networks. We leave the analysis of the gap between the NE and optimal solution of (\ref{eq:D2Dopt}) in more general networks to future work. To compare the BR Algorithm and LB Algorithm, we consider a case with ten D2D links. Fig.~\ref{fig:ratecdf-org} shows that the rate distributions using the BR Algorithm and the LB Algorithm are almost the same. This implies that we can use the solution of the LB Algorithm to approximate the solution of the BR Algorithm. By comparing to  Figs. \ref{fig:RCcdf} and \ref{fig:RDcdf}, we  observe that the performance of different algorithms in single-cell networks is similar to the multi-cell networks. Therefore, more discussion is left to the following subsection. 
\begin{figure}
\centering
\includegraphics[width=8cm,height=6cm]{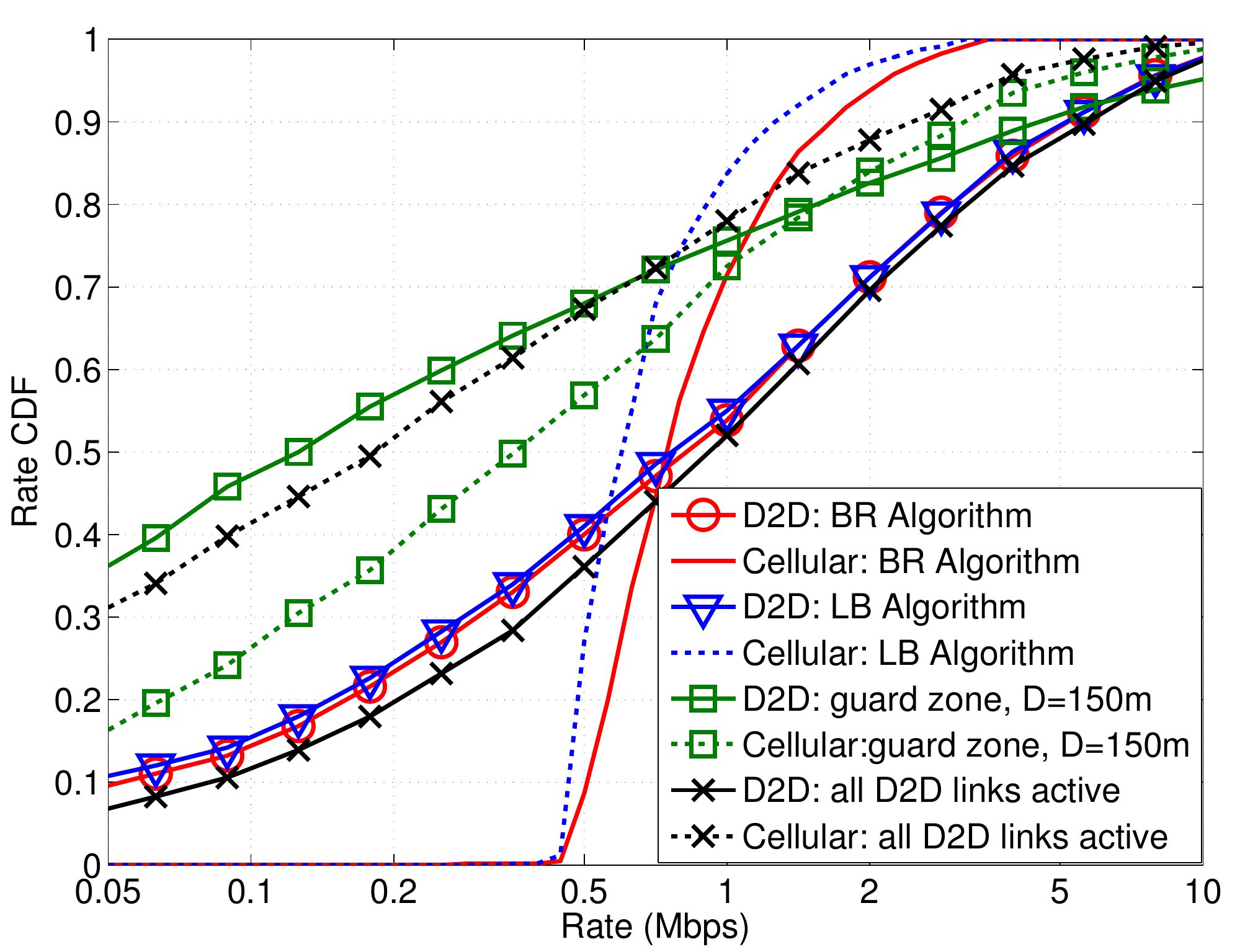}
\caption{The rate distributions of D2D and cellular links using different algorithms in a single-cell network.} 
\label{fig:ratecdf-org}
\end{figure}

\subsection{The Upper Problem: Network Pricing Mechanism}
\begin{figure}
\centering
\includegraphics[width=8cm,height=6cm]{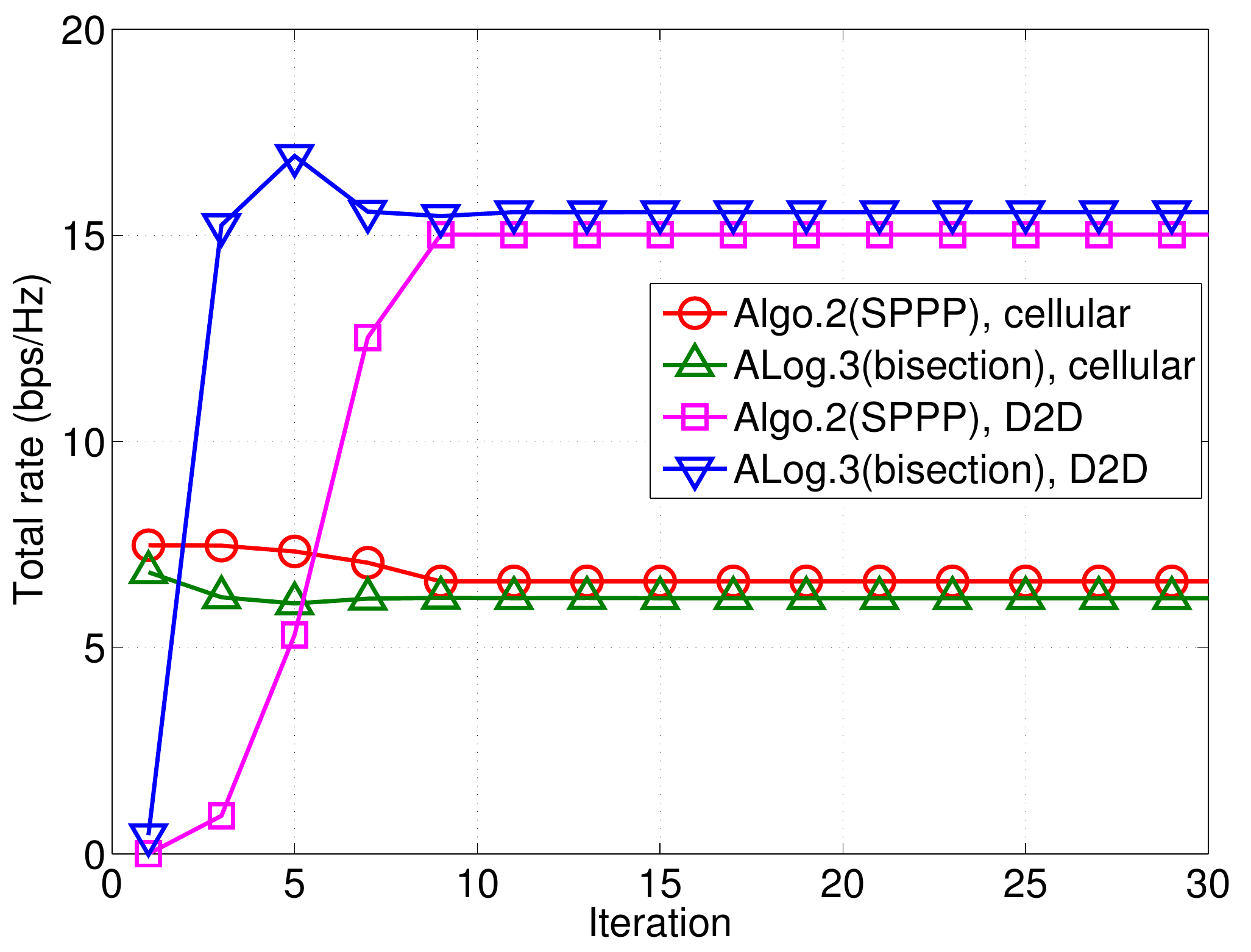}
\caption{The convergence of  different algorithms.}
\label{fig:converge-game}
\end{figure}
In this section, we consider an asynchronous  multi-cell network, where each cell allocates resources independently. We let the interference tolerance level be the same as the received signal of cellular link (i.e., the interference level normalized by the cellular signal is 0dB). From the simulation results, the number of iterations required for the convergence of the LB Algorithm is about 4-8, which is quite small. Fig. \ref{fig:converge-game} shows the convergence of the SPPP and bisection algorithms. Both algorithms converge quickly.  While SPPP provides a larger cellular rate, it converges more slowly than the bisection~algorithm. The quick convergence of LB Algorithm and bisection algorithm implies that the complexity of the proposed scheme  $O(N_D\log_2(\mu_{\max} / \epsilon) + N_D^2 K)$ is much lower than the complexity of the centralized scheme  $O(N_x^{N_D} N_D^2)$, where $K\in [4,8]$ and $\log_2(\mu_{\max} / \epsilon)\in [5,10]$, while $N_x\geq 2$ generally and thus $N_x^{N_D}\geq 2^{10}$ in our setup.

\begin{figure}
\centering
\includegraphics[width=8cm, height=6cm]{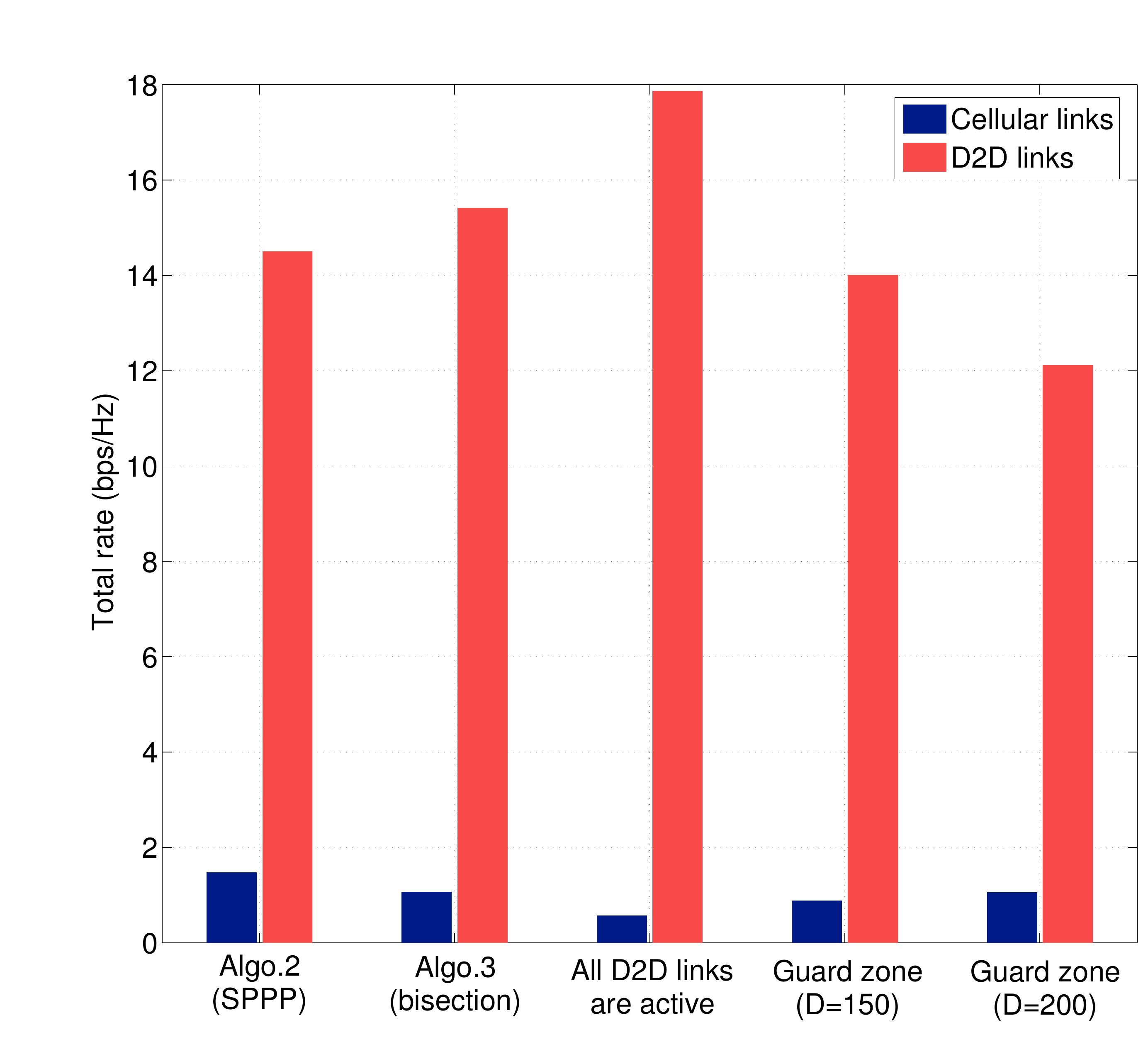}
\caption{The total rates of cellular and D2D links using different approaches.}
\label{fig:sumrate-shared}
\end{figure}

In Fig. \ref{fig:sumrate-shared}, we compare the rates of D2D and cellular links using different algorithms. The  SPPP and bisection algorithms provide larger D2D and/or cellular rates than the guard zone schemes. If there is no interference management, the rate of cellular UE is very small (see ``All D2D active'' in the figure). Adopting the proposed algorithms, the cellular links can get much better performance (total cellular rate increasing from $0.61$ to about $1.07$ bps/Hz), at the cost of less total throughput (about $12\%$ loss in our setup). We observe that  SPPP provides a slightly larger rate for cellular links than the bisection algorithm. This implies that in some cases, the function $U_{c1}$ is non-monotonic and the optimal $\mu$ is not at the intersection of $U_{c1}$ and $U_{c2}$. However, in general, the gap between the bisection algorithm and the SPPP algorithm is small regardless of the monotonicity of  $U_{c1}$. The average total rate in conventional networks, where potential D2D links operate only in cellular mode, is $2.4$ bps/Hz in our setup. Defining the rate gain by the increased total rate divided by the average rate in conventional networks, we conclude that allowing D2D links and using proposed algorithms achieves a very large rate gain compared to conventional networks (about $5$x in our simulation setup), and meanwhile keeps the performance of cellular UEs at an acceptable level (with average rate per cellular link being $1.07$ bps/Hz). Note that the rate gain depends on various system parameters, such as average D2D link length and the amount of D2D traffic.

The rate distributions of cellular and D2D links are shown in Figs. \ref{fig:RCcdf} and \ref{fig:RDcdf}, respectively. Fig.~\ref{fig:RCcdf} shows that proposed algorithms can effectively protect the cellular performance. Comparing to Fig.~\ref{fig:ratecdf-org}, where the average cellular rate is about $1.5$ bps/Hz with the normalized interference tolerance level being $5$dB, we can conclude that a lower normalized interference tolerance level ($0$dB) is needed in the multi-cell scenario. Also, the rate of cellular links has a larger range than the single cell scenario (i.e., the variance is larger). One possible reason is that there may be some nearby interfering D2D links and cellular UEs in neighboring cells. Though the D2D links have large rates without interference management, they hurt cellular links a lot. Adopting the guard zone scheme, cellular links can be protected, at the cost of the degradation of D2D throughput. Moreover, it is difficult to develop a tractable framework to study the guard zone scheme, and thus difficult to find the optimal distance threshold analytically. Therefore, the SPPP and bisection schemes are more preferable.

\begin{figure}
\centering
\includegraphics[width=8cm, height=6cm]{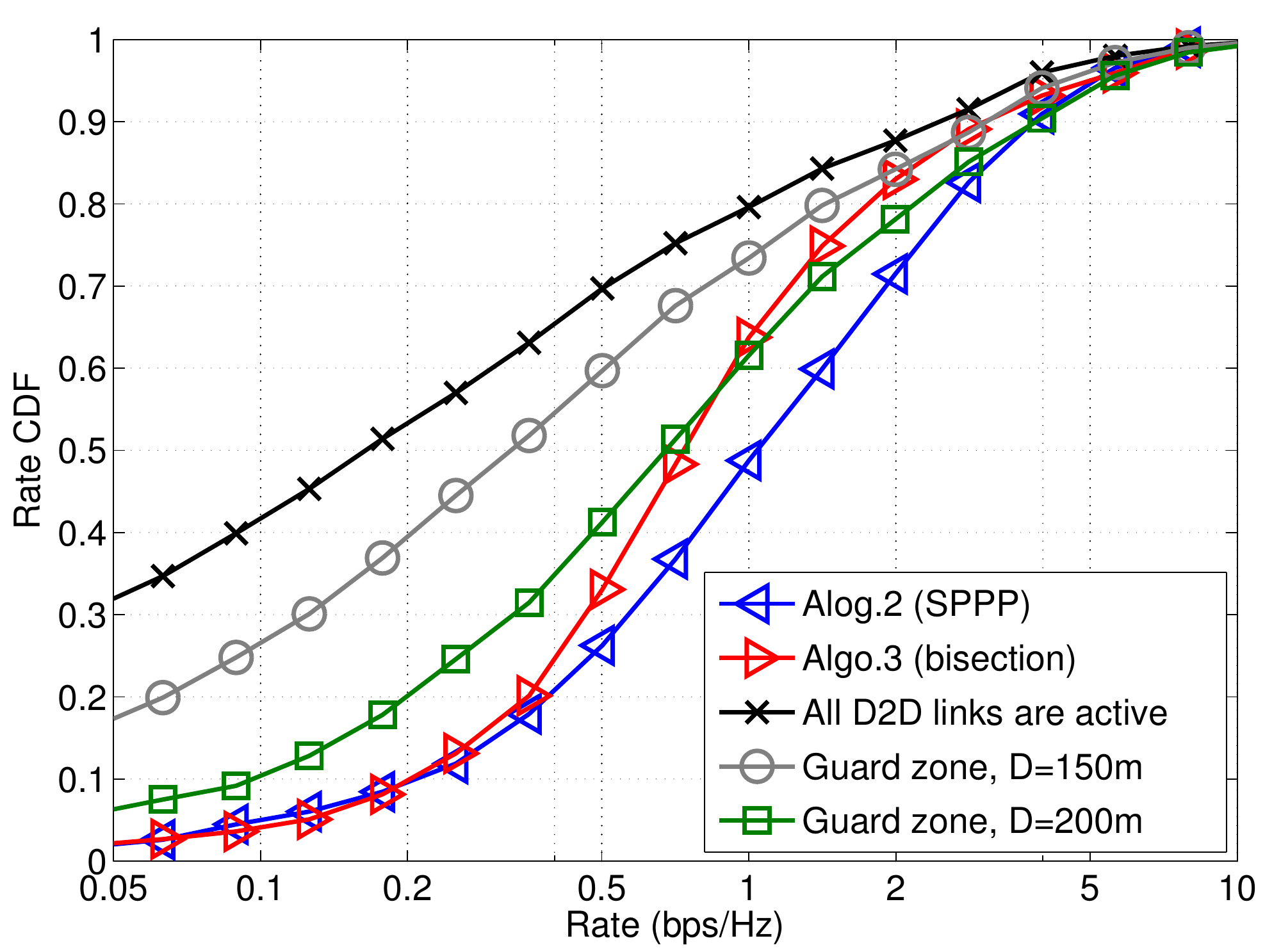}
\caption{The rate distribution of cellular links using different approaches. }
\label{fig:RCcdf}
\end{figure}

\begin{figure}
\centering
\includegraphics[width=8cm, height=6cm]{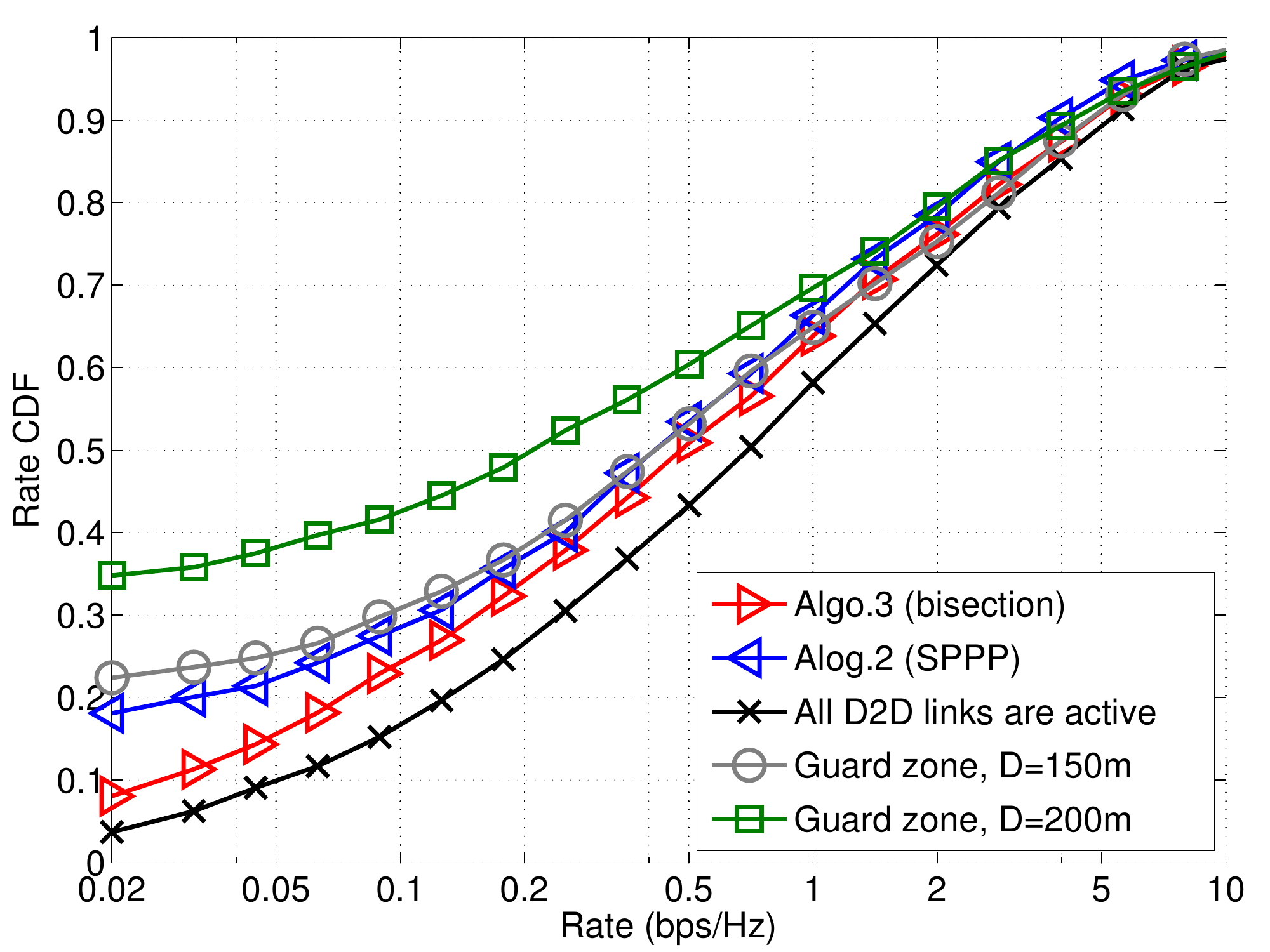}
\caption{The rate distribution of D2D links using different approaches. }
\label{fig:RDcdf}
\end{figure}

Note that the interference tolerance level can be tunable to maximize utility functions in terms of both the cellular and D2D links (e.g., the total rate in the hybrid network). 
We show the rates of cellular and D2D links versus the interference tolerance level numerically in Figs.~\ref{fig:RCvsQ} and \ref{fig:RDvsQ}, respectively. The analysis of optimal $Q$ with respect to different utility functions is left to future work. Fig. \ref{fig:RCvsQ}  shows that as the interference tolerance level increases, the rate of cellular users decreases, because more D2D links are allowed to transmit. On the other hand, as $Q$ increases, D2D links can access the RBs more aggressively and the total rate of D2D links increases. 
The IO Algorithm protects the performance of cellular links well. However, in a network with strict interference constraints, the total rate of D2D links using the IO Algorithm is less than the SPPP and bisection algorithms, which implies the importance to consider power control for D2D resource allocation, as well as the interference experienced at D2D receivers. From Fig.~\ref{fig:RDvsQ}, we can conclude that although the IO Algorithm is very simple, it can only be applied to the cases with high interference tolerance (e.g., cases with normalized interference tolerance level larger than $0$dB).

\begin{figure}
\centering
\includegraphics[width=8cm, height=6cm]{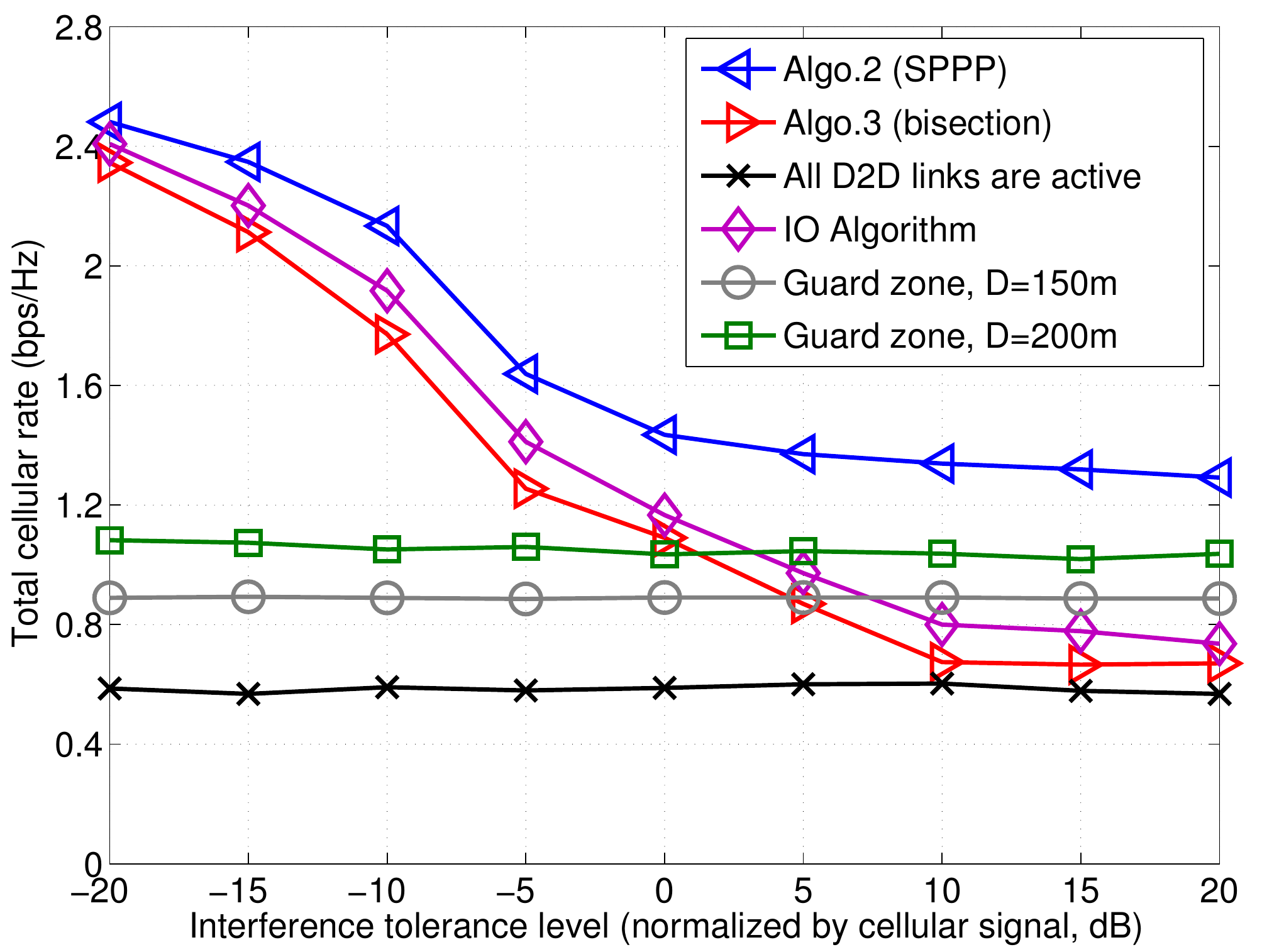}
\caption{The  rate of cellular links vs. the interference tolerance level. The normalized interference tolerance level means that the interference tolerance level $Q$ is divided by the signal of the cellular link accessing the considered~RB.}
\label{fig:RCvsQ}
\end{figure}

\begin{figure}
\centering
\includegraphics[width=8cm, height=6cm]{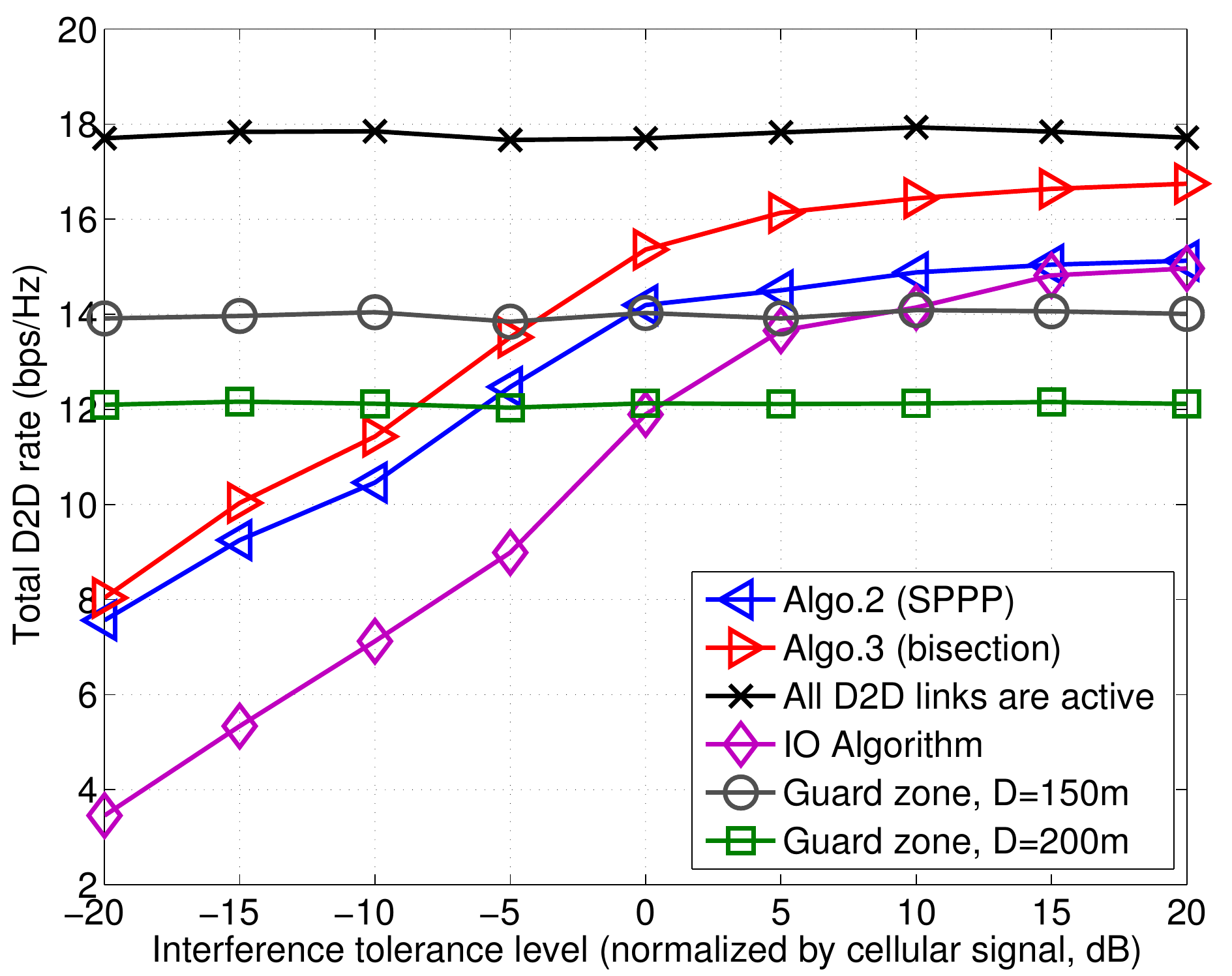}
\caption{The total rate of D2D links vs. the interference tolerance level.  The normalized interference tolerance level means that the interference tolerance level $Q$ is divided by the signal of the cellular link accessing the considered~RB.}
\label{fig:RDvsQ}
\end{figure}

Figs. \ref{fig:RCvsDense} and \ref{fig:RDvsDense} show the rates of cellular links and D2D links versus the D2D density, respectively. We ignore the guard zone scheme with radius 150m in these figures, since its performance is similar to the guard zone scheme with radius 200m. As shown in Fig. \ref{fig:RDvsDense}, the total rate of D2D links increases as D2D density increases, while the rate of cellular link decreases in Fig. \ref{fig:RCvsDense}. The decrease of cellular rate using the SPPP and bisection algorithms vanishes much more quickly than the scenario with all D2D links being active, which suggests the efficiency of the SPPP and bisection algorithms for protecting cellular transmissions. 
The figures also show that besides the interference tolerance level, the throughput gain of the proposed algorithms also highly depends on the density of  D2D links.

\begin{figure}
\centering
\includegraphics[width=8cm, height=6cm]{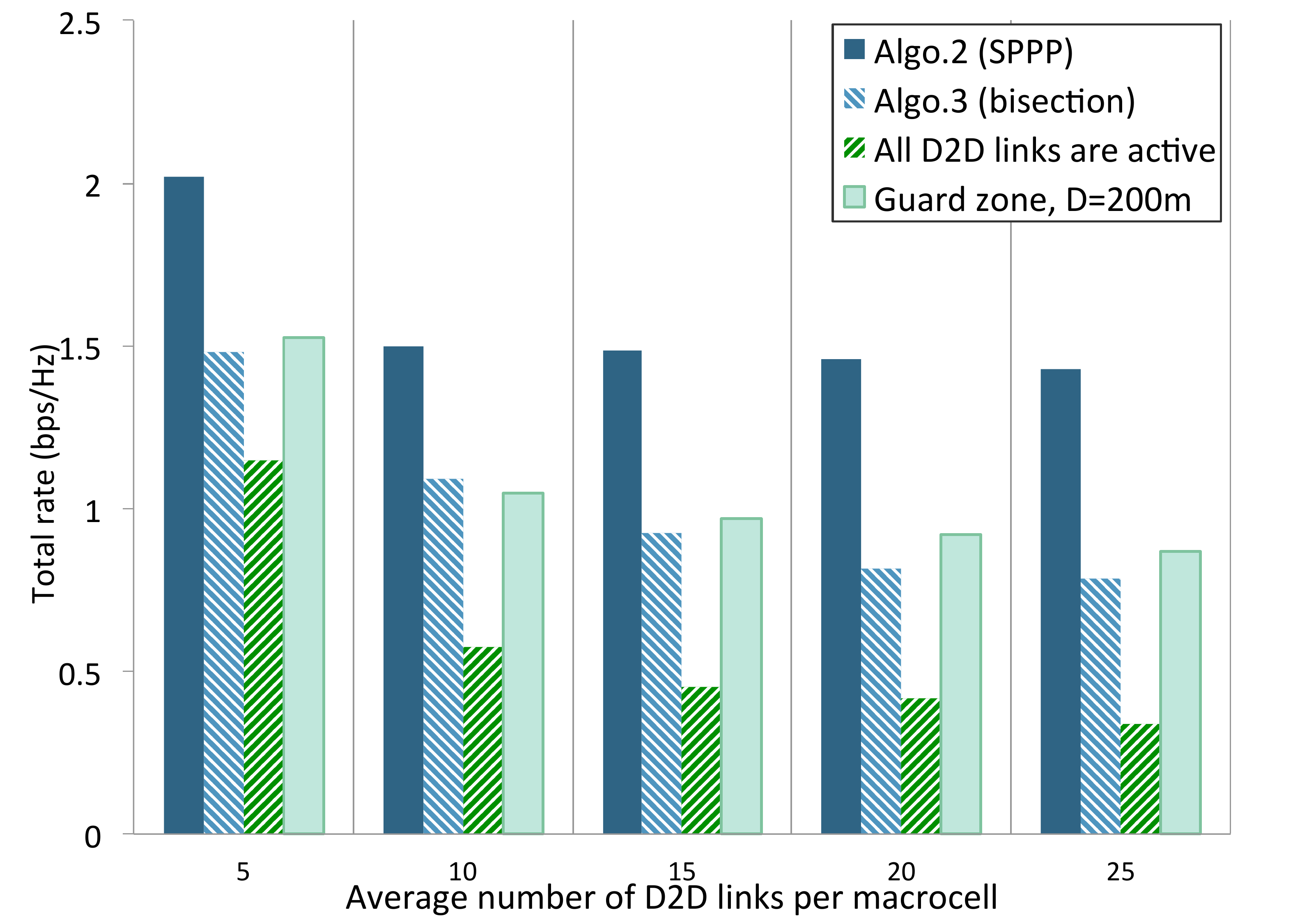}
\caption{The total rate of cellular links vs. different D2D densities.}
\label{fig:RCvsDense}
\end{figure}

\begin{figure}
\centering
\includegraphics[width=8cm, height=6cm]{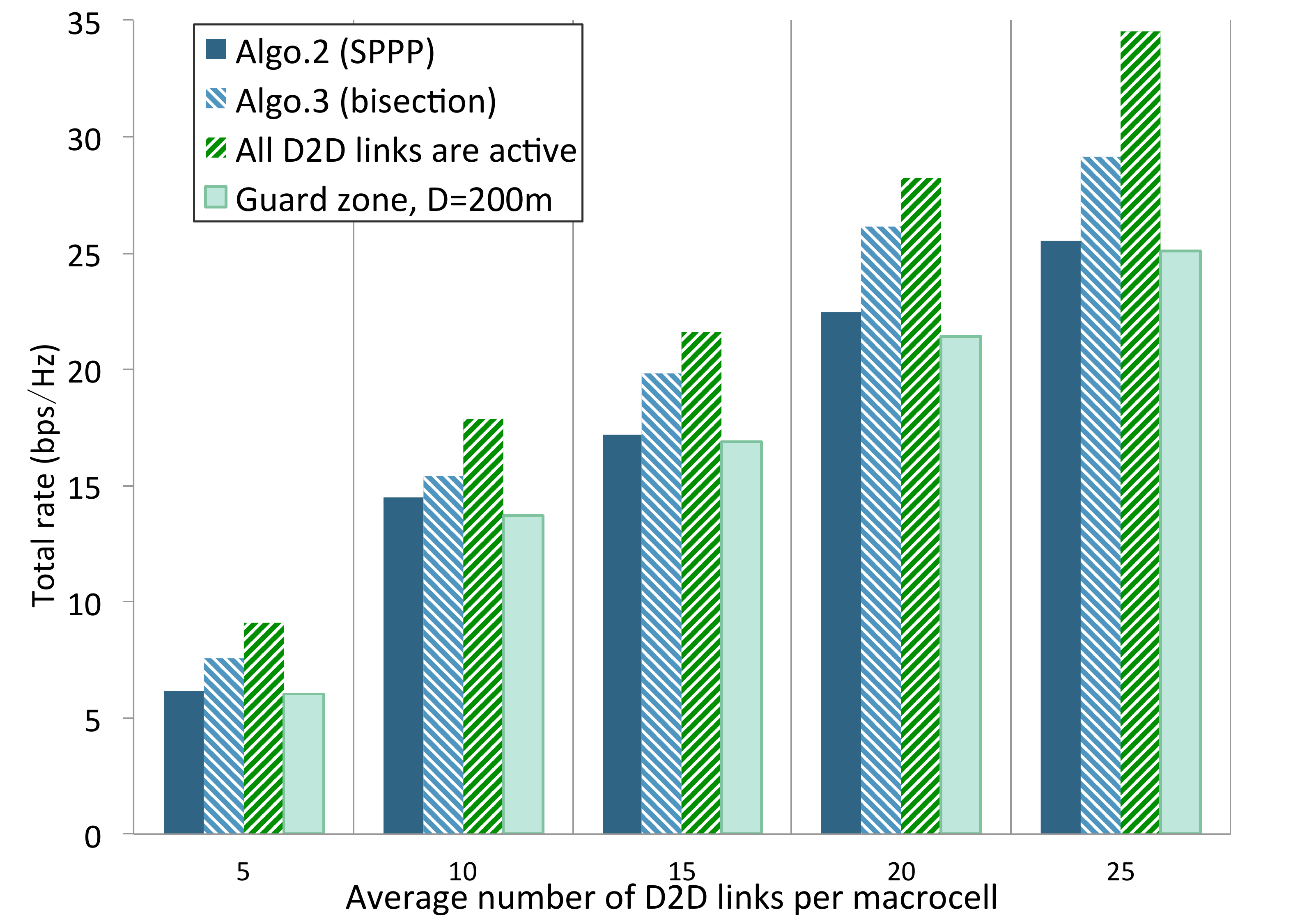}
\caption{The total rate of D2D links vs. different D2D densities.}
\label{fig:RDvsDense}
\end{figure}

\section{Conclusion}\label{sec:conclusion}
This paper presents a decentralized spectrum management for a shared network consisting of D2D and cellular links, aiming to maximize the total throughput of D2D links with an interference constraint for protecting cellular transmissions. We propose a low-complexity low-overhead distributed algorithm to update D2D access probabilities, and use the SPPP algorithm to get the optimal price for controlling the interference from co-channel D2D links. Though the SPPP Algorithm requires global CSI, it provides a benchmark for other algorithms. We further propose a low-overhead efficient heuristic algorithm based on the bisection method, which is shown to be convergent. Numerical results show that the heuristic algorithm has about the same performance as the SPPP algorithm, especially in the cases with low interference tolerance level. Another simple greedy algorithm is proposed and shown to perform well in scenario with high interference tolerance level. The proposed algorithms provide a large throughput gain with a performance guarantee of cellular links, compared to a conventional network with links operating only in cellular mode. Comparing to the cases without interference management (i.e., all D2D links are active), the average rate of cellular links improves significantly (e.g., average rate per cellular link increases from $0.61$ to $1.07$ bps/Hz in our setup). This implies that the proposed algorithms can efficiently  manage the interference from D2D links to the cellular network. Future work could include investigation of  more general utility functions incorporating both throughput and fairness, joint optimization of D2D mode selection, and consideration of a more flexible multiple-cell system.

\appendices
\section{Proof of Proposition \ref{prop:BRfunc-property}}\label{pf:prop-BRfunc-property}
\begin{enumerate}
\item  We can complete the proof by applying the \emph{Maximum Theorem} with $\Phi=U_i(\mathbf{x}_i;\mathbf{x}_{-i},\boldsymbol{\mu})$ \cite{ShuLeu07}.
\item 
Let $\mathcal{A}$ and $\mathcal{S}$ be any two distinct subsets of set $\mathcal{D}=\{1, 2, \cdots, N_D\}$. We have
\begin{equation}\label{eq:def-a-s}
\left\{
\begin{aligned}
& a_{i}-s_{i}\leq 0, && \textrm{for } i\in\mathcal{D}\setminus\left(\mathcal{A}\cup \mathcal{S} \right),\\
& 0<a_{i}-s_{i}<P_{D_i}h_{ii}, && \textrm{for } i\in\mathcal{A},\\
&a_{i}-s_{i} \geq P_{D_i}h_{ii}, && \textrm{for } i\in\mathcal{S}.
\end{aligned}
\right.
\end{equation}

Given that the number of D2D links is finite, we have that the number of choices of disjoint $\mathcal{A}$ and $\mathcal{S}$ is finite. Therefore, the domain of $f_L$ can be partitioned into finitely many polyhedra. With $\mathbf{s}=\mathbf{Gx}$, the inequalities (\ref{eq:def-a-s}) can be changed to inequalities in terms of $\mathbf{x}$, which defines a (possibly empty) polyhedron in $\mathcal{X}$. On the polyhedron $\mathcal{P}_n$, according to (\ref{eq:optsol-D2D}), we have $x_{i} =\frac{a_{i}-s_{i}}{P_{D_i}h_{ii}}$ for $i\in\mathcal{A}^{(n)}$, $x_i =1$ for $i\in\mathcal{S}^{(n)}$ and $x_i=0$ otherwise, which can be expressed in a matrix form as $x_{i}=\mathbf{B}_i^{(n)} s_i + \mathbf{b}_i^{(n)}$, where $\mathbf{B}_{i}^{(n)}$ is defined in the Proposition \ref{prop:BRfunc-property}. Combining with $\mathbf{s}=\mathbf{Gx}$, we complete the proof.

\end{enumerate}

%
%

\section{Proof of Proposition \ref{prop:U1-property}}\label{pf:prop-U1-property}
\begin{enumerate}
\item We define a function $g:\mathbb{R}^+ \rightarrow \mathcal{X}$ as $g(\mu) = \left(g_1(\mu), \dots, g_{N_D}(\mu)\right)$, where $g_i(\mu)=x_i(\mu,\mathbf{x}_{-i}(0))$,  $\mathbf{x}(0)$ is a given initial vector, and $x_i(\mu, \mathbf{x}_{-i}(0))$ is calculated according to (\ref{eq:optsol-D2D}) by fixing $\mathbf{x}_{-i} = \mathbf{x}_{-i}(0)$. Observing (\ref{eq:optsol-D2D}), we can see that $g_i(\mu)$ is a continuous function for a given $\mathbf{x}_{-i}$. According to properties of continuous functions (see, e.g., Theorem 4.10 in \cite{Rudin64}), $g(\mu)$ is continuous due to the fact that each of the function $g_1(\mu), \dots, g_{N_D}(\mu)$ is continuous.  Proposition \ref{prop:BRfunc-property} shows that the best-response function $f_L:\mathcal{X}\rightarrow \mathcal{X}$ is continuous. Invoking properties of continuous functions again (see, e.g., Theorem 4.7 in \cite{Rudin64}), we can conclude that $f_L(g(\mu)$ is a continuous mapping, which implies that the NE $x_i^*(\mu)$ is a continuous function of $\mu$. Therefore, $U_{c1}$ is also a continuous function of $\mu$.

\item Recall that $\mathcal{A}$ and $\mathcal{S}$ denote the sets of active D2D links and of saturated D2D links, respectively.  Without loss of generality, let $\mathcal{A}=\{1,\dots,n\}$ and $\mathcal{S}=\{n+1, \dots, n+m\}$.  We denote $\mathbf{x}_{a}=[x_1,  \cdots, x_n]^T$, $\mathbf{x}_{s}=[x_{n+1},  \cdots, x_{n+m}]^T$ and  $\mathbf{x}_{0}=[x_{n+m+1}, \cdots, x_{N_D}]^T$. Let $\mathbf{H}_{aa}$  be a matrix with $(i,j)$th element being $\frac{P_{D_j}h_{ji}}{P_{D_i}h_{ii}}$ for $i,j \in\mathcal{A}$, $\mathbf{W}_{a} = \left[\frac{w_1}{P_{D_1}g_{11}},   \cdots,   \frac{w_n}{P_{D_n}g_{nn}}\right]^T$ and $\mathbf{C}_{a} = \left[\frac{I_{C1} + I_{D_1}}{P_{D_1}h_{11}},   \cdots,  \frac{I_{Cn} + I_{D_n}}{P_{D_n}h_{nn}}\right]^T$, where $I_{D_i} = \sum_{j\in\mathcal{S}} P_{D_j}h_{ji}$. According to~(\ref{eq:optsol-D2D}), we have $\mathbf{x}_{a} = (\mathbf{H}_{aa})^{-1}\left[\frac{\mathbf{W}_a}{\mu} - \mathbf{C}_a \right]$,  $\mathbf{x}_s = 1$ and $\mathbf{x}_0 = 0$.  The domain of function $U_{c1}$ can be divided into finite polyhedra according to different $\mathcal{A}$ and $\mathcal{S}$.  In each polyhedron, we have
\begin{equation}\label{eq:U1-matrix}
\begin{aligned}
&U_{c1}(\mu)
=\mu \sum_{i\in\mathcal{A}} x_{i}P_{D_i}g_{ii} + \mu \sum_{i\in\mathcal{S}} P_{D_i}g_{ii}\\
&=\mu \boldsymbol{\beta}^T_a (\mathbf{H}_{aa})^{-1} \left[ \frac{\mathbf{W}_a}{\mu}-\mathbf{C}_a \right] + \mu \boldsymbol{\beta}^T_s \mathbf{1}^T\\
& =  \boldsymbol{\beta}^T_a (\mathbf{H}_{aa})^{-1}  \left[ \mathbf{W}_a +\left(\mathbf{H}_{aa}\boldsymbol{\tilde{\beta}}_1 \left(\boldsymbol{\beta}^T_s \mathbf{1}^T \right) -\mathbf{C}_a\right)\mu \right],
\end{aligned}
\end{equation}
where $\boldsymbol{\beta}_a=\left[P_{D_1}g_{11},P_{D_2}g_{22} \cdots, P_{D_n}g_{nn} \right]^T$, $\boldsymbol{\beta}_s=\left[ P_{D_{n+1}}g_{(n+1)(n+1)}, \cdots, P_{D_{n+m}}g_{(n+m)(n+m)} \right]^T$, $\boldsymbol{\tilde{\beta}}_1=[\frac{1}{P_{D_1}g_{11}}, 0, \cdots, 0]$, and $\mathbf{1}=[1, 1, \cdots, 1]^T$. Therefore, $U_{c1}$ is a linear function in each given polyhedron, and thus it is piecewise affine.

\item We use the same argument as the proof of Theorem 1 in \cite{RazLuo11}. The first condition  is equivalent to  that  matrix $\mathbf{H}_{aa}^{(\boldsymbol{\beta})} \defeq \textrm{diag} (\boldsymbol{\beta}_a) \mathbf{H}_{aa}\left(\textrm{diag}(\boldsymbol{\beta}_a) \right)^{-1}$ is strictly (column-wise) diagonally dominant. Defining $\mathbf{C}_{a}^{(\boldsymbol{\beta})} =\textrm{diag}(\boldsymbol{\beta}_a) \left(\mathbf{C}_a - \mathbf{H}_{aa}\boldsymbol{\tilde{\beta}}_1 \left(\boldsymbol{\beta}_s^T \mathbf{1}^T\right)\right)$, we have
\begin{equation} \label{eq:cof-mu}
 \boldsymbol{\beta}_a^T \mathbf{H}_{aa}^{-1} \left(\mathbf{C}_a - \mathbf{H}_{aa}\boldsymbol{\tilde{\beta}}_1 \left(\boldsymbol{\beta}_s^T \mathbf{1}^T\right)\right) = \mathbf{1}^T\left(\mathbf{H}_{aa}^{(\boldsymbol{\beta})} \right)^{-1} \mathbf{C}_{a}^{(\boldsymbol{\beta})}.
\end{equation}
To show that $U_{c1}$ is non-increasing, we need to show that (\ref{eq:cof-mu}) is non-negative. The first condition is a sufficient condition for $ \mathbf{1}^T\left(\mathbf{H}_{aa}^{(\boldsymbol{\beta})} \right)^{-1} \geq 0$. Similar proof can be found in \cite{RazLuo11}, and we ignore the details. 
The remaining proof is to show  $\mathbf{C}_a - \mathbf{H}_{aa}\boldsymbol{\tilde{\beta}}_1 \left(\boldsymbol{\beta}_s^T \mathbf{1}^T\right)\geq 0$.  The $i$th element of the left term of the above inequality is $\frac{1}{P_{D_i}h_{ii}} \left(I_{C_i} + \sum_{j\in\mathcal{D}_s}P_{D_j}h_{ji}- \frac{h_{1i}}{g_{11}}  \sum_{j\in\mathcal{D}_s}P_{D_j}g_{jj} \right)$, which
 implies that $\sum_{j\in\mathcal{D}_s}P_{D_j}\left(h_{ji}- \frac{h_{1i}}{g_{11}} g_{jj}\right)\geq 0, \ \forall i\in\mathcal{D}_a$ is a sufficient condition to make $\mathbf{C}_a - \mathbf{H}_{aa}\boldsymbol{\tilde{\beta}}_1 \left(\boldsymbol{\beta}_s^T \mathbf{1}^T\right)\geq 0$. Combining with $\mathbf{1}^T\left(\mathbf{H}_{aa}^{(\boldsymbol{\beta})} \right)^{-1} \geq 0$, 
we can conclude that (\ref{eq:cof-mu}) is non-negative, and thus $U_{c1}$ is non-increasing.

\end{enumerate}

\section{Proof of Proposition \ref{prop:bisection}}\label{pf:prop-bisection}
We first show that the intersection of $U_{c1}$ and $U_{c2}$ on  $[0,\mu_{\max}]$ is non-empty, and then show that the bisection algorithm converges to one of the intersection points. Prop. \ref{prop:U1-property} shows that $U_{c1}$ is a continuous function of $\mu$. It is easy to observe that $U_{c2}$ is also a continuous function of $\mu$. Therefore, $U_{c1}-U_{c2}$ is a continuous function of $\mu$. Recalling the assumption that when all D2D links are active, the interference from D2D to BSs is greater than the interference tolerance level, we have $U_{c1}-U_{c2}>0$ when $\mu=0$. On the other hand, when $\mu=\mu_{\max}$, we have $U_{c1} -U_{c2}<0$. According to the intermediate value theorem, we can conclude that there is some number $\mu\in[0,\mu_{\max}]$ such that $U_{c1}(\mu) - U_{c2} (\mu)=0$. In other words, there is at least one intersection point between $U_{c1}$ and $U_{c2}$ on $[0,\mu_{\max}]$. 

Adopting the bisection algorithm, the interval is divided into two halves at each iteration. The interval at iteration $t$ is denoted by $L(t)=[a_t, b_t]$, where $a_0=0$ and $b_0=\mu_{\max}$. According to procedures of the bisection algorithm, we have $U_{c1}(a_t)\geq U_{c2}(a_t)$ and $U_{c1}(b_t)\leq U_{c2}(b_t)$ at each iteration $t$. Similar to the proof of the existence of intersection points on $[0, \mu_{\max}]$, we can show that there is at least one intersection point between $U_{c1}$ and $U_{c2}$ on $L(t)$. Therefore, the bisection algorithm preserves the existence of intersection points in current interval. The length of interval $L(t)$ has $|L(t)|=|L(t-1)|/2=\dots = \mu_{\max}/2^t$. It must stop when $|L(t)|\leq \epsilon$, which implies that the algorithm converges, and the maximum number of iteration for convergence, denoted by $T$, satisfies $\mu_{\max}/2^T = \epsilon$, i.e.,  $T=\log_2(\mu_{\max}/\epsilon)$.


\bibliographystyle{ieeetr}
\bibliography{d2dgamebib}

\end{document}